\newtheorem{theorem}{Theorem}
\newcommand{\lb}{\left(}
\newcommand{\rb}{\right)}
\newcommand{\lcb}{\left\{}
\newcommand{\rcb}{\right\}}
\newcommand{\nbold}{\mathbf{n}}
\newtheorem{definition}{Definition}
\begin{document}	
\title{Performance Analysis of a MIMO System with Bursty Traffic in the presence of Energy Harvesting Jammer}
\author{Sujatha~Allipuram,~\IEEEmembership{Student~Member,~IEEE,}
	Parthajit~Mohapatra,~\IEEEmembership{Member,~IEEE,} Nikolaos Pappas,~\IEEEmembership{Senior Member,~IEEE,} Shabnam Parmar
	and Saswat~Chakrabarti,~\IEEEmembership{Senior Member,~IEEE}
	\thanks{This work was supported in part by the joint research project funded by the Swedish Research Council (VR Sweden) and Department of Science and Technology (DST), India. A part of this work has appeared in \cite{allipuram-ncc-2020}.}
	\thanks{Sujatha Allipuram and Saswat Chakrabarti are with the G.S Sanyal School of Telecommunication, Indian Institute of Technology Kharagpur,
		WB, India-721302  (e-mail: asujatha@iitkgp.ac.in, saswat@ece.iitkgp.ac.in).}
	\thanks{Parthajit Mohapatra is with the Department of Electrical Engineering, Indian Institute of Technology Tirupati, AP, India-517506   (e-mail:parthajit@iittp.ac.in).}
    \thanks{Nikolaos Pappas is with the Department of Science and Technology, Link\"{o}ping University, SE-60174 Norrk\"{o}ping, Sweden (e-mail:nikolaos.pappas@liu.se).}
	\thanks{Shabnam Parmar is with Intel Technology India Pvt. Ltd., Bangalore, India-560017 (e-mail:shabnam.parmar@intel.com).}
}
\maketitle
\vspace{-20pt}
\begin{abstract}
	This paper explores the role of multiple antennas in mitigating jamming attacks for the Rayleigh fading environment with exogenous random traffic arrival. The jammer is assumed to have energy harvesting ability where energy arrives according to Bernoulli process. The outage probabilities are derived with different assumptions on the number of antennas at the transmitter and receiver. The outage probability for the Alamouti space-time code is also derived. The work characterizes the average service rate for different antenna configurations taking into account of random arrival of data and energy at the transmitter and jammer, respectively. In many practical applications, latency and timely updates are of importance, thus, delay and Average Age of Information (AAoI) are the meaningful metrics to be considered. The work characterizes these metrics under jamming attack. The impact of finite and infinite energy battery size at the jammer on various performance metrics is also explored. Two optimization problems are considered to explore the interplay between AAoI and delay under jamming attack. Furthermore, our results show that Alamouti code can significantly improve the performance of the system even under jamming attack, with less power budget. The paper also demonstrates how the developed results can be useful for multiuser scenarios.
	\end{abstract}
\begin{IEEEkeywords}
		Jamming, Space-time code, Queuing, Multiple Antenna, Energy Harvesting.
\end{IEEEkeywords}	
\IEEEpeerreviewmaketitle
\vspace{-10pt}
\section{Introduction}	
Internet of Things (IoT) networks are vulnerable to attacks due to the broadcast and superposition nature of the wireless medium. Jamming is a common form of denial of service (DoS) attack which can  significantly impact the IoT network's performance. In many IoT scenarios such as status update systems, users may not always have data to send; rather, data arrival at the users are random. It is also required to ensure  timely delivery of data at the destination within a given period of time due to requirements in terms of delay, as well as it is required to keep the information at the destination as fresh as possible. For such scenarios, along with stable throughput, it is required to consider delay and age of information (AoI), which are meaningful metrics to take account of latency in the communication.   In recent years, AoI has been used to capture the freshness of information \cite{AoINOW}. One of the concern arises is to ensure high throughput, low delay, and low AoI simultaneously under jamming attack, when the user also has energy constraint. The prospect of multiple antennas system has  been explored for green communication \cite{cui-jsac-2004, nguyen-tgcn-2018}. \textit{The role of multiple antennas in mitigating the jamming attack and enhancing the system performance when timely updates are important is not well explored in the existing literature and the primary focus of this work.}

In this paper, a point-to-point MIMO system in the presence of a jammer is considered, where the transmitter is equipped with a queue to store the incoming traffic. It is assumed that jammer has energy harvesting capability \cite{8068986,8768226,9170580,7294641}.  In many practical scenarios, it may not be possible for the jammer to have a constant source of energy supply due to the hostile nature of the environment, inaccessibility to the location, or absence of any external supply of energy such as a power grid.  However, it is possible to deploy jammer in such environments due to the advancements made in energy harvesting. When the attacker has energy harvesting ability, the deployment of such jammer will be easier and it can make them autonomous. When it is required to guarantee low latency along with reliability, even a random jammer can cause significant harm to the performance of the system, and hence, it is important to understand the impact of jamming on the reliability and latency of the system. The considered model helps to capture the impact of various parameters of the attacker such as the capacity of the battery, jamming probability, and jamming power on the performance of the system. From the jammer's perspective, the considered model can help to explore how large should be the battery size and energy harvesting ability to perform jamming effectively. As a special case, the considered model reduces to the case of a constant jammer.

This work develops a cross-layer framework for the point-to-point MIMO system in the presence of a jammer. The cross-layer framework captures the random arrival of data at the transmitter using network-level metrics such as stability of the queue and reliability of the underlying channel model through outage probability. The outage probability takes into account of Signal to Noise Ratio (SNR) or Signal to Jamming and Noise power Ratio (SJNR) and fading phenomena of the wireless channel between the various nodes. Such frameworks have been used to study different communication models in the existing literature \cite{lu-TVT-2016, wang-tvt-2017}. The work also demonstrates how the developed results in the paper can be useful to characterize stability region for multi-user scenarios. 

The paper first obtains the outage probabilities for different MIMO configurations under jamming attack. Then, the service probability at the transmitter is characterized  using the outage probability obtained for different antenna configurations at the legitimate nodes. The service probability is used to characterize the average delay per packet, and the average AoI (AAoI) of various antenna configurations under random arrival of data at the legitimate transmitter. These metrics are characterized for two scenarios where the jammer can have a battery of \textit{finite capacity} or \textit{unlimited capacity}. The role of space-time coding on performance is also investigated. The derived results can also be used to minimize AAoI or average delay or maximize the average service rate for optimal data arrival rate. In the later part of the paper, it is shown how the characterization of the stable throughput of the point-to-point system helps to characterize the stability region for the 2-user SIMO broadcast channel (BC) in the presence of an energy harvesting jammer.
\vspace{-10pt}
\subsection{Related works}
The conventional security mechanisms generally do not provide protection or detection of jamming attacks in wireless networks. The jammer can adopt different strategies to carry the attack, and the various jamming mitigation techniques can be classified into the following categories: channel hopping, coding protection, rate-adaption, and MIMO-based jamming mitigation \cite{xu2005feasibility}. Channel hopping and spread spectrum are two commonly used techniques to mitigate the jamming attack. One of the disadvantages of the channel hopping-based method is the requirement of preshared channel assignment.   Interference alignment which is used generally to mitigate interference has been used for MISO broadcast channels based on topology for mitigation of jamming attack \cite{amuru-tit-2014}. To understand the fundamental limits on the  performance under jamming attacks, tools and techniques from information theory have been used for various communication models \cite{ kashyap-tit-2004, amuru-tit-2014}.




The attacker can also have the ability to eavesdropping as well as jamming \cite{8968374, 7470273, ryu2016transmission}. The problem of secure communication over a correlated fading channel is considered  in  \cite{7470273} for multi-user multi-cell massive MIMO in the presence of an active eavesdropper equipped with multiple antennas. The work shows that transmit antenna correlation diversity between the nodes can be exploited to improve the performance even under pilot contamination attacks. The work in \cite{ryu2016transmission} develops a transmission strategy that involves the determination of secure transmission rate and beam-forming design based on complete or partial knowledge of the jamming channel in case of MISO system. A friendly jammer can also be used to enhance the secrecy performance by sending a jamming signal to degrade the SNR at the eavesdropper  \cite{8121977, 8351952}. In general, there is complete trust between the jammer and legitimate nodes in such scenarios.




Legitimate users always try to achieve the desired performance and on the other hand, attacker attempt to cause maximum harm to the communication. To capture this conflict arising between jammer and legitimate users, game theory has been used to study various models under jamming attacks \cite{kashyap-tit-2004, ulukus-mcc-2005, garnaev-twc-2016, mukherjee-tsp-2013, garnev-wifs-2014}. In \cite{kashyap-tit-2004}, using a zero-sum mutual information game it is shown that knowing the input signal at the jammer is not useful for the point-to-point MIMO Rayleigh fading channel.  The work in \cite{ulukus-mcc-2005} considers a non-cooperative zero-sum game for 2-user multiple access channels in the presence of a jammer under different assumptions on channel characteristics. It is shown that when jammer does not know users' signals, the solution for the game exists. Another important problem is the allocation of resources in the presence of the attacker. The work in \cite{garnev-wifs-2014} uses the Bayesian game using an $\alpha$-fairness utility for resource allocation under an unknown jamming attack. The tool from game theory has also been used in the case of active eavesdropper where the attacker can either jam, eavesdrop on the ongoing communication, or both \cite{mukherjee-tsp-2013, garnaev-twc-2016}.


The works discussed so far assume that users always have data to send. However, in many practical scenarios, data arrival at the users can be random. The game-theoretic framework has been used to explore the role of random arrival of data in mitigating jamming attacks in \cite{sagduyu-isit-2010}. The framework of the non-cooperative game has been used in \cite{5518798} to determine the impact of jamming over a collision channel.  For many applications, where a delay and timely updates are of importance, throughput alone cannot capture these attributes. For such scenarios, delay and AoI are relevant metrics to be considered, and these metrics can give new insights into system performance. AoI so far has been studied under different queuing disciplines in simple point-to-point systems, in more elaborated systems with multiple access, or scheduled access  \cite{AoINOW,sun2019age}. The work in \cite{chen2020multiuser} studied the user scheduling problem in a MIMO status update system, where multiple users with a single antenna aim to send their latest updates to an information-fusion access point equipped with multiple antennas via a shared wireless channel. The authors derived an optimal scheduling policy that can minimize the AoI over the networks. Furthermore, the effect of jamming on AoI has been studied in \cite{SunAoIW2018, GarnaevAoIW2019}. Characterizing the stability region for multiuser scenarios is a challenging problem due to the interaction between the queues \cite{8320826, 8357571}. However, the impact of jamming on the achievable throughput, delay, and AoI in a system with random traffic is not well understood in fading scenarios when users are equipped with multiple antennas. This also brings an important question on how diversity (time, spatial, or both) can be exploited to improve the performance of the system in the presence of a jammer. This work primarily aims to answer these questions. 
\vspace{-15pt}
\subsection{Contributions}
The main contributions of the work are summarized below.
\begin{enumerate} \itemsep 0.2em
\item The outage probabilities are obtained for the Rayleigh fading environment in the presence of jammer under the different assumptions on the number of antennas at the transmitter and receiver. The outage probability for Alamouti coding scheme is also derived for the considered model. As the signal sent by the transmitter and attacker undergo Rayleigh fading, it is a non-trivial problem to determine the distribution corresponding to the signal to jamming and noise ratio (SJNR). To the best of the authors' knowledge, the role of space-time code in mitigating jamming attack has not been explored in the existing literature.
 
\item Characterization of stable throughput or stability region under random arrival of data in the presence of jammer is a non-trivial problem. The paper first aims to characterize the stable throughput under jamming attack for the point-to- point MIMO system. The average service rate is characterized for the considered system model using the outage probabilities obtained for different multiple antennas setup. To capture the energy harvesting ability of the jammer two scenarios are considered: \textit{battery with unlimited capacity} and \textit{battery with limited capacity}. The developed results take account of the random arrival of data at the transmitter and energy at the attacker. 

\item The work also characterizes the average delay performance and AoI  for various multi-antenna configurations in presence of the attacker. The developed results help to explore the interplay between delay and AAoI for different antenna configurations under jamming attack and developed results reconfirm the utility of the AAoI metric for latency-aware communication. Two optimization problems are considered where the objective is to minimize the AAoI with different delay requirements.

\item The work also characterizes the stability region of the 2-user SIMO broadcast channel  under jamming attack for the Rayleigh fading scenario using the results obtained for the point-to-point model. In this case, receiver decodes its intended message by treating other user's signal as noise.
\end{enumerate}
The obtained results in this work provide a unified view of the role of multiple antennas in improving the performance of the system in terms of stable throughput, delay, and AAoI under jamming attack. It is also found that Alamouti coding can achieve minimum delay and AoI among the different MIMO configurations in the setup with less power budget. The Alamouti coding scheme has the added advantage of not requiring channel state information at the transmitter. The developed results for the point to point MIMO system can act as a basic building block to characterize the stability region of multiuser scenarios, which is a challenging problem due to the interaction among the queues. 
\vspace{-3pt}
\section{System Model}
\vspace{-10pt}
\begin{figure}[h]
	\centering
	\includegraphics[trim={0cm 0cm 0cm 0cm},clip, height=1.8in, width=3.35in]{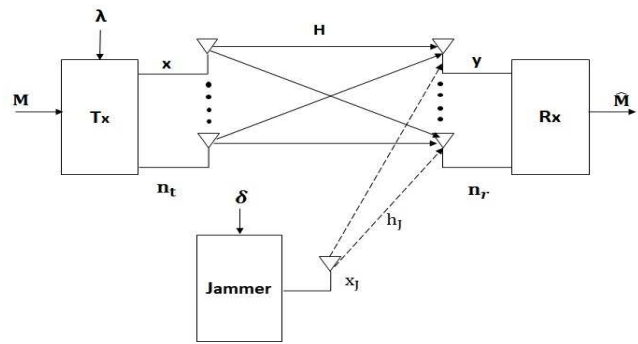}
	\setlength{\belowcaptionskip}{-6pt} %
	\caption{Point-to-point MIMO system in the presence of energy harvesting jammer.}\label{fig:figure1}
\end{figure}
This paper considers a point-to-point MIMO system in the presence of a jammer, where transmitter and receiver are equipped with $n_t$ and $n_r$ antennas, respectively. The transmitter has a queue, which stores the incoming packet, and it needs to be sent reliably to the receiver in the presence of the attacker. A pictorial description of the model is shown in Fig~\ref{fig:figure1}. 
\vspace{-10pt}
\subsection{Network layer model}
The packet arrival process at the queue is assumed to be independent and stationary with mean probability $ \lambda$, i.e., Bernoulli \cite{9233374}. It is also assumed that the queue at the transmitter has unlimited capacity. The average length of the queue is denoted by $\bar{Q}$ and serves the packets with average service rate $\mu$. It is assumed that the jammer does not have a constant source of power supply, but it has energy harvesting ability.  The energy arrival process at the jammer is modeled by a Bernoulli process where the rate of energy arrival is assumed to be $\delta$ 
 \cite{9365698, 7959595, 9086254,7487983, 7009995}. These chunks are stored in a battery (an energy buffer). It is assumed that if one chunk of energy is harvested at the jammer, it is enough to jam the communication  with fixed jamming power $P_J$.  In the paper, two scenarios are considered for the energy buffer: (a) \textit{Energy buffer with unlimited capacity (Section~IV-A)}; and (b) \textit{Energy buffer with limited capacity (Section~IV B)}.
 \vspace{-10pt}
 \subsection{Physical layer model}
The input-output relation for the considered system model in the presence of jammer is given as follows:
\begin{align}
	& \mathbf{y} = \mathbf{H} \mathbf{x} + \mathbf{h_J} x_J + \nbold, \label{eq:sysmodel1}
\vspace{-10pt}
\end{align}
where the channel between transmitter and receiver is denoted by $\mathbf{H} \in \mathcal{C}^{n_r \times n_t}$ and channel between the jammer and receiver is denoted by $\mathbf{{h}_J} \in \mathcal{C}^{n_r \times 1}$. The input $\mathbf{x}$ to the channel is drawn from a Gaussian codebook, i.e., $\mathbf{x} \sim \mathcal{CN}(0, P \mathbf{I})$, and P is the transmitter's power budget. The attacker sends the jamming signal $x_J$ drawn from Gaussian distribution with jamming power $P_J$, i.e.,  $x_J \sim \mathcal{CN}(0, P_J)$. Due to the limited computing ability and constrained in the energy of the attacker, it is assumed that the jammer has a single antenna. The noise at the receiver is modeled as AWGN, i.e., $\mathbf{n} \sim \mathcal{CN}(\mathbf{0}, \mathbf{I})$. In the case of MISO, $\mathbf{H} \in \mathcal{C}^{1 \times n_t}$, and $\mathbf{n} \sim \mathcal{CN}(0, 1)$. For SIMO, $\mathbf{H} \in \mathcal{C}^{n_r \times 1}$ and input is drawn from a Gaussian codebook, i.e., $\mathbf{x} \sim \mathcal{CN}(0, P)$. The channel between the various nodes undergoes an independent Rayleigh fading process. The attacker does not have CSI and it cannot access the queue state information of the transmitter.  The transmitter and receiver need to know the statistical knowledge of the CSI between the jammer and receiver. When the battery is not empty, the attacker jams with a probability $p_J$. This type of jammer is known as a random jammer in the literature and belongs to the category of proactive jammer~\cite{10.1504/IJAHUC.2014.066419}.
\vspace{-10pt}
\subsection{Stability of the queue and average service rate}
The stability of the queue is defined as follows \cite{szpankowski_1994}.
\begin{definition}
The queue is said to be stable if the following condition is satisfied $\displaystyle\lim_{t \to \infty} Pr[Q^t<x] = F(x)$ and $\displaystyle\lim_{x \to \infty}  F(x) = 1$, where $Q^t$ denotes the length of queue at the beginning of the time slot $t$. When the following condition is satisfied $\lim_{x \to \infty}\lim_{t \to \infty}\text{inf } Pr[Q^t<x]= 1$, the queue is called sub-stable. If a queue is stable, then it is also sub-stable. The queue is said to be unstable, if the queue is not sub-stable. 
\end{definition}
Hence, the queue is stable if the average data arrival rate ($\lambda$) is less than the average service rate ($\mu$), i.e., $\lambda < \mu$. This result holds under the assumption that the queue's arrival and service processes are strictly jointly stationary. For the system model considered in this paper, the average service rate is defined as follows:
\vspace{-10pt}
\begin{align}
& \mu = \hspace{1mm} (1-p_J)\hspace{1mm}(1-p_{WoJ}^{\text{out}}) + \hspace{1mm} p_J \hspace{1mm} (1-p_{J}^{\text{out}}), \label{eq:servicerate1}
\end{align}
where  $p_{J}^{\text{out}}$ and $p_{WoJ}^{\text{out}}$ are the probability of unsuccessful decoding of the packet at the receiver with and without jamming attack, respectively. To characterize the service rate, it is required to obtain these probabilities. In this work, the outage probability is used as a proxy for the probability of unsuccessful decoding of data at the receiver \cite{lu-TVT-2016, wang-tvt-2017}. To evaluate the service rate,  legitimate nodes need to know the jamming power $P_J$ and jamming probability $p_J$. In practice, estimating these parameters is a non-trivial problem and some of the works in this direction can be found in \cite{6552344, article, article2, 6179344, 9039615}. Tools from machine learning can also be useful in estimating these parameters \cite{8979256}. 
\vspace{-6pt}
\section{Outage Probability for different antenna configurations under Jamming}\label{sec:outage}
In this section, the outage probabilities are obtained for different multi-antenna configurations for the Rayleigh fading scenario under a jamming attack.  The work also derives the outage probability for the MIMO with Alamouti coding. The role of space-time diversity in mitigating jamming attacks is not well understood from the existing literature.  The developed results help to explore the role of multiple antennas in mitigating jamming attacks. The derivation of these expressions is non-trivial as it is required to obtain the distribution associated with signal to jamming and noise ratio (SJNR) for multiple antenna configurations, where the message signal and the jamming signal go through the Rayleigh fading channel. These results are stated in the following theorem.
\begin{theorem}\label{th:firsttheorem}
The outage probabilities $(p_J^{\text{out}})$ for different antenna configurations under jamming attack are provided below.
\begin{enumerate}
\item The outage probability for MISO $(1 \times n_t)$ is given by the following expression:
\begin{align}
 p_J^{\text{out}} &=  1 - \frac{\Gamma(n_t,\frac{(2^R-1)n_t}{P})}{\Gamma(n_t)}  + \frac{e^{\frac{1}{P_J}}\Gamma(n_t,\frac{(2^R-1)n_t}{P}+\frac{1}{P_J})}{\Gamma(n_t)(1+\frac{P}{n_t(2^R-1)P_J})^{n_{t}}}. \label{eq:th1eq1}
\end{align}
\item The outage probability for SIMO $(n_r \times 1)$ is given by the following expression
\begin{align}
 p_{J}^{\text{out}} & =   1-\frac{\Gamma(n_r,\frac{(2^R-1)}{P})}{\Gamma(n_r)} +  \frac{e^{\frac{1}{P_J}}\Gamma(n_r,\frac{(2^R-1)}{P}+\frac{1}{P_J})}{\Gamma(n_r)(1+\frac{P}{(2^R-1)P_J})^{n_{r}}}. \label{eq:th1eq2}
\end{align}
\item The outage probability for  MIMO $(n_r \times n_t)$ with Alamouti code is given by the following expression
\begin{align}
 p_{J}^{\text{out}} & = 1- \frac{\Gamma(N, \frac{(2^R-1)}{\beta})}{\Gamma(N)} +  \dfrac{e^{\frac{1}{P_J}} \Gamma\bigg(N,\big( \frac{2^R-1}{\beta} + \frac{1}{P_J}\big)\bigg)}{\beta^N  \Gamma(N) \big[ \frac{1}{\beta} + \frac{1}{(2^R-1)P_J}\big]^N}. \label{eq:th1eq3}
\end{align}
\end{enumerate}
where $R$ is the given target rate, $\beta \triangleq \frac{P}{n_t}$, $N \triangleq n_tn_r$, $\Gamma(s)$ and $\Gamma(s,x)$ are the gamma function and the upper incomplete gamma function defined as follows, respectively
\begin{align}
\Gamma(s)=\int\limits_0^\infty t^{s-1}e^{-t}dt \text{ and } \Gamma(s,x)=\int\limits_x^\infty t^{s-1}e^{-t}dt.
\end{align}
\end{theorem}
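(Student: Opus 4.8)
The plan is to reduce all three formulas to a single computation. For each configuration I would first show that, after the receiver's combining, the signal-to-jamming-plus-noise ratio (SJNR), with the jamming signal treated as additional Gaussian noise, has the common form
\begin{align}
\gamma \;=\; \frac{c\,G}{P_J\,V+1},
\end{align}
where $G$ is a $\mathrm{Gamma}(m,1)$ random variable whose shape $m$ equals the number of diversity branches, $V\sim\mathrm{Exp}(1)$ is the effective squared jamming-channel gain left after combining, $G$ and $V$ are independent, and $c$ absorbs the transmit-power normalization: $c=P/n_t$ for the MISO case (equal power split across the $n_t$ antennas), $c=P$ for the SIMO case, and $c=\beta=P/n_t$ for the Alamouti scheme. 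With this, the outage event $\{\log_2(1+\gamma)<R\}$ is exactly $\{\,G<(2^R-1)(P_JV+1)/c\,\}$, and the value of $(m,c)$ is all that changes between parts (1)--(3).

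For MISO $(1\times n_t)$ this is immediate: the received signal is a scalar, the channel gain $\|\mathbf{H}\|^2=\sum_{i=1}^{n_t}|h_i|^2$ is a sum of $n_t$ i.i.d.\ unit-mean exponentials, hence $\mathrm{Gamma}(n_t,1)$, and the scalar jamming channel contributes $|h_J|^2\sim\mathrm{Exp}(1)$, independent of $\mathbf{H}$; thus $(m,c)=(n_t,P/n_t)$. For SIMO $(n_r\times1)$ I would let the receiver apply combining matched to $\mathbf{H}$: the post-combining signal gain is $\|\mathbf{H}\|^2\sim\mathrm{Gamma}(n_r,1)$, while the jamming term becomes $\big(\mathbf{H}^H\mathbf{h}_J/\|\mathbf{H}\|\big)x_J$; since $\mathbf{h}_J$ is isotropic complex Gaussian, its projection onto the unit vector $\mathbf{H}/\|\mathbf{H}\|$ is $\mathcal{CN}(0,1)$, so the residual squared jamming gain $|\mathbf{H}^H\mathbf{h}_J|^2/\|\mathbf{H}\|^2$ is $\mathrm{Exp}(1)$ and independent of $\|\mathbf{H}\|^2$; thus $(m,c)=(n_r,P)$.

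The common computation is then the following. Writing $a:=(2^R-1)/c$ and $b:=(2^R-1)P_J/c$ and conditioning on $G$, the event $\{G<a+bV\}$ is certain when $G<a$ and otherwise needs $V>(G-a)/b$, which has conditional probability $e^{-(G-a)/b}$; hence
\begin{align}
p_J^{\text{out}}
&= \Big(1-\tfrac{\Gamma(m,a)}{\Gamma(m)}\Big)
+ \frac{e^{a/b}}{\Gamma(m)}\int_a^{\infty} g^{m-1}e^{-g(1+1/b)}\,dg \nonumber\\
&= 1-\frac{\Gamma(m,a)}{\Gamma(m)} + \frac{e^{a/b}\,\Gamma\!\big(m,\,a(1+1/b)\big)}{\Gamma(m)\,(1+1/b)^{m}} .
\end{align}
Substituting $a,b$ and using $a/b=1/P_J$, $a(1+1/b)=(2^R-1)/c+1/P_J$, and $1+1/b=1+c/((2^R-1)P_J)$ reproduces \eqref{eq:th1eq1} for $(m,c)=(n_t,P/n_t)$ and \eqref{eq:th1eq2} for $(m,c)=(n_r,P)$.

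Finally, for MIMO with Alamouti coding I would use the standard fact that the Alamouti (orthogonal space-time block) combiner over the receive antennas produces an equivalent scalar channel with gain $\|\mathbf{H}\|_F^2$, which, being a sum of $N=n_tn_r$ i.i.d.\ unit-mean exponentials, is $\mathrm{Gamma}(N,1)$, with per-symbol power $\beta=P/n_t$. The one genuinely new step --- and the part I expect to be the main obstacle --- is to identify the jamming-plus-noise term that survives this combining: one must show that, conditioned on $\mathbf{H}$, after the usual normalization by $\|\mathbf{H}\|_F^2$ the residual jamming power is $P_JV$ with $V\sim\mathrm{Exp}(1)$ independent of $\|\mathbf{H}\|_F^2$, so that $\gamma=\beta G/(P_JV+1)$ as before. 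Granting this, the computation above with $(m,c)=(N,\beta)$ yields \eqref{eq:th1eq3}, since $\beta^{N}\big[\tfrac1\beta+\tfrac1{(2^R-1)P_J}\big]^{N}=(1+1/b)^{N}$ and $a(1+1/b)=\tfrac{2^R-1}{\beta}+\tfrac1{P_J}$. The MISO and SIMO parts together with the integral are routine; the Alamouti jamming-term distribution is where the care is needed.
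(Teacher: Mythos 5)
Your proposal is correct and follows essentially the same route as the paper: for each configuration the paper writes the SJNR as a Gamma-distributed signal gain divided by $1+P_J\times(\text{Exp}(1)\text{ jamming gain})$ and evaluates $\Pr\{G < a + bV\}$ by conditioning on the signal gain and integrating the exponential tail of the jamming gain into upper incomplete gamma functions, which is exactly your unified $(m,c)$ computation carried out three times with $(n_t,P/n_t)$, $(n_r,P)$, and $(N,\beta)$. The step you flag as the main obstacle --- showing that the residual jamming gain after MRC or Alamouti combining is $\mathrm{Exp}(1)$ and independent of the signal gain --- is not proved in the paper either (it simply asserts the SJNR expressions, citing standard Alamouti references for the rate formula), so your explicit isotropic-projection argument for the SIMO case actually supplies more justification than the paper's own proof.
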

\begin{proof} The proofs for different multi-antenna scenarios are given below:\\
\textit{Case 1 (MISO $(1 \times n_t)$):} In this case, the instantaneous achievable rate is given by the following expression
	\begin{equation}
	\begin{aligned}
	R_i = \log_2 \lb 1+\frac{|\mathbf{h}|^2\frac{P}{n_t}}{1+|{h_J}|^2 P_J}  \rb, \label{eq:misoach1}
	\end{aligned}
	\end{equation} 
	where $|\mathbf{h}|^2$ and $|{h_J}|^2$ are chi-square and exponentially distributed, respectively. Due to the lack of CSI at the transmitter, power is equally divided among all the transmit antennas. The outage probability is determined as follows
	\begin{equation} 
	\begin{split}
	p_{J}^{\text{out}} & = Pr\{ R_i <R \},   \\
	&= Pr\lcb \log_2 \lb 1+\frac{|\mathbf{h}|^2\frac{P}{n_t}}{1+|{h_J}|^2 P_J}  \rb <R \rcb,\\
	&= Pr\lcb \frac{|\mathbf{h}|^2P}{n_t} - (2^R-1)|{h_J}|^2 P_J  <2^R-1 \rcb. \nonumber \\
	\end{split}
\end{equation}
where $Pr(.)$ is the probability. 
To simplify the notation, substitute  $|\mathbf{h}|^2$ as U i.e., $U \sim \chi^2_{2n_{t}} $. $|{h_J}|^2$ as W i.e., $W \sim e^{-w}$ in the above expression and the outage probability becomes
	\begin{equation} 
	\begin{split}
   p_{J}^{\text{out}} & = 1- Pr\bigg\{ W < \frac{U P/n_t-(2^R-1)}{(2^R-1)P_J} \bigg\},  \\
	&= 1-\int\limits_{u=(2^R-1)n_t/P}^{\infty} \int\limits_{w=0}^{\frac{uP/n_t-(2^R-1)}{(2^R-1)P_J}} f_{W}(w) f_{U}(u) \:  dw \:du,\nonumber \label{use_SuccPr}
\end{split}
\end{equation}
\begin{equation} 
\begin{split}
&= 1-\int\limits_{u=(2^R-1)n_t/P}^{\infty} \!\!\!\!\! f_{U}(u) \: \bigg [ \int\limits_{w=0}^{\frac{uP/n_t-(2^R-1)}{(2^R-1)P_J}} f_{W}(w) dw \bigg] du, \\
&= 1-\!\!\!\!\!\int\limits_{u=(2^R-1)n_t/P}^{\infty}\!\!\!\!\!\!\!\!\!f_{U}(u)\ du  
+ e^{\frac{1}{P_J}}\!\!\!\!\!\!\!\!\!\int\limits_{u=(2^R-1)n_t/P}^{\infty} \!\!\!\!\!\!\!\! f_{U}(u)e^{\big(\frac{-Pu}{n_t(2^R-1)P_J}\big)} du, \label{eq:misopout}
	\end{split}
\end{equation}
where $f_{U}(u)$ and $f_{W}(w)$ are the probability density functions of U and W, respectively.
Solving both the integrals, the outage probability reduces to following
\begin{equation}
	\begin{aligned}
	p_J^{\text{out}} = 1-\frac{\Gamma(n_t,\frac{(2^R-1)n_t}{P})}{\Gamma(n_t)} + \frac{e^{\frac{1}{P_J}}\Gamma(n_t,\frac{(2^R-1)n_t}{P}+\frac{1}{P_J})}{\Gamma(n_t)(1+\frac{P}{n_t(2^R-1)P_J})^{n_t}}. \label{eq:misoach2}
	\end{aligned}
\end{equation}	

\textit{Case 2 (SIMO $(n_r \times 1)$):}
In this case, the input-output relation is given by the following expression
\begin{equation}
	\begin{aligned}
	\mathbf{y}= \mathbf{h}x+\mathbf{h_J} {x_J}+\textbf{n}
	\end{aligned}
\end{equation}
As the transmitter does not have CSI, it allocates all the power to the message signal. The instantaneous achievable rate for this case is
\begin{equation}
	\begin{aligned}
	R_i= \log_2 \bigg( 1+\frac{|\mathbf{h}|^2P}{1+|{h_J}|^2 P_J}  \bigg).
	\end{aligned}
\end{equation} 
In the above, $ |\mathbf{h}|^2 = h_1^2 + h_2^2 + \cdots h_{n_{r}}^2 $ follows chi-squared distribution with $2{n_{r}}$ degrees of freedom, i.e. $|\mathbf{h}|^2 \sim \chi^2_{2n_{r}} $. The outage probability is obtained as follows:
	\begin{equation} 
	\begin{split}
	p_{J}^{\text{out}} &= Pr\{ R_i < R \}, \\
	&= Pr\big\{\log_2 \Big( 1+\frac{|\mathbf{h}|^2P}{1+|{h_J}|^2 P_J}  \Big)<R \big\},  \\
	&= Pr\big\{|\mathbf{h}|^2P - (2^R-1)|{h_J}|^2 P_J  <2^R-1 \big\} .
	\end{split}
	\end{equation}
Following similar approach as in (\ref{eq:misopout}) (previous case), the outage probability expression becomes
	\begin{equation}
	\begin{aligned}
p_{J}^{\text{out}} = 1-\frac{\Gamma(n_{r},\frac{(2^R-1)}{P})}{\Gamma(n_{r})} + \frac{e^{\frac{1}{P_J}}\Gamma(n_{r},\frac{(2^R-1)}{P}+\frac{1}{P_J})}{\Gamma(n_{r})(1+\frac{P}{(2^R-1)P_J})^{n_{r}}}. \label{eq:simoach2}
	\end{aligned}
	\end{equation}
\textit{Case 3 (MIMO $(n_r \times n_t)$):}	
The instantaneous achievable rate for a MIMO fading channel using Alamouti coding \cite{sandhu2000space,tse2004fundamentals} in the presence of jammer is given by
\begin{equation}\label{MIMO_ala_cap}
R_i = \frac{K}{T}\log_2\Big(1+\frac{\frac{P}{n_t}||\mathbf{H}||_F^2}{1+|{h_J}|^2P_J} \Big).
\end{equation}
To simplify the notations, define the following quantities $ V \triangleq \frac{P}{n_t} ||\mathbf{H}||_F^2 $ and $S \triangleq ~|{h_J}|^2$. Note that $V$ is said to have the gamma distribution with $\Omega=N=n_tn_{r}$ and $\beta=\frac{P}{n_t}$, i.e. $V \sim G(\Omega, \beta)$, and S is exponentially distributed $\sim e^{-s}$. 
The quantity $\frac{K}{T}$ denotes the rate of the space-time block coding (STBC), where $K$ represents the number of symbols transmitted, and $T$ denotes the number of time slots used for transmission. The rate of the STBC is considered to be $1$ in this case. However, we can extend the result to any rate of the Alamouti coding scheme. The outage probability of a MIMO system with Alamouti coding at the transmitter in the presence of jammer is given as follows
\begin{equation}
\begin{split}
p_{J}^{\text{out}} &  =Pr( R_i < R),\nonumber\\
&=  Pr \Big( \log_2\Big(1+\frac{V}{1+{SP_J}}\Big)<R \Big), \nonumber  \\
\end{split}
\end{equation}
\begin{equation}
\begin{split}
&= Pr\Big(\frac{V-(2^R-1)}{P_J(2^R-1)}< S\Big),\nonumber \\  
&= 1- Pr\Big(S <\frac{V-(2^R-1)}{P_J(2^R-1)}\Big),\nonumber \\
\end{split}
\end{equation}
\begin{equation}
\begin{split}
&= 1- \int_{v=(2^R-1)}^\infty \int_{s=0}^{\frac{v-(2^R-1)}{P_J(2^R-1)}} f_V(v) f_S(s) ds dv.  \label{pout_mimo_ala_jam}
\end{split}
\end{equation}
Note that 
\begin{equation}\label{pdf_gamma}
f_V(v)=  \dfrac{v^{(N-1)}e^{\frac{-v}{\beta}}}{\Gamma(N) \beta^N}~ \text{and}~ f_S(s)=e^{-s}.
\end{equation}
Substituting \eqref{pdf_gamma} in \eqref{pout_mimo_ala_jam} and simplifying further, results in the following expression


\begin{equation}
p_{J}^{\text{out}} = 1- \frac{\Gamma(N, \frac{(2^R-1)}{\beta})}{\Gamma(N)} + \dfrac{e^{\frac{1}{P_J}} \Gamma\bigg(N,\big( \frac{2^R-1}{\beta} + \frac{1}{P_J}\big)\bigg)}{\beta^N  \Gamma(N) \big[ \frac{1}{\beta} + \frac{1}{(2^R-1)P_J}\big]^N}.\label{eq:mimoach2}
\end{equation}\vspace{-8pt}
\end{proof}
\vspace{-20pt}
\subsection{Asymptotic Analysis}
To obtain further insights on the role of antennas on the outage probability in the presence of jammer, the power at the transmitter and jammer are driven to infinity but their ratio is held constant, i.e., $\frac{P_J}{P} = \eta$. It is assumed that $\eta \geq 1$  and this condition ensures that when $P \rightarrow \infty$, then $P_J \rightarrow \infty$. The asymptotic expressions for different MIMO configurations are as follows:
\begin{align}
& \lim_{P,P_J\longrightarrow \infty} p_J^{\text{out}}  =  \frac{1}{(1+\frac{1}{\eta n_t(2^R-1)})^{n_{t}}}, \label{eq:miso_asym_wj} \qquad \text{(MISO)}\\
& \lim_{P,P_J\longrightarrow \infty}p_{J}^{\text{out}}  =    
 \frac{1}{(1+\frac{1}{\eta(2^R-1)})^{n_{r}}}. \label{eq:simo_asym_wj} \qquad \text{(SIMO)} \\
\text{and }& \lim_{P,P_J\longrightarrow \infty}p_{J}^{\text{out}}  =  \dfrac{1}{\big[ 1 + \frac{1}{\eta n_t(2^R-1)}\big]^N}. \label{eq:mimo_asym_wj} \qquad \text{(MIMO with Alamouti)}
\end{align}
\textbf{Remarks:}
\begin{enumerate}
	\item As a special case, the outage probability without jamming for different MIMO configurations can be obtained by setting $P_J =0$ using the results in Theorem~\ref{th:firsttheorem}. 
	\item When $\eta$ is unbounded in (\ref{eq:miso_asym_wj}), (\ref{eq:simo_asym_wj}), and (\ref{eq:mimo_asym_wj}), the outage probability becomes $1$ for all the cases, irrespective of the number of antennas at the transmitter, receiver, or both. 
	\item As a special case we can obtain the outage probability for the SISO with jammer by substituting $n_t = 1$ or $n_r=1$ in the result for MISO or SIMO in \eqref{eq:th1eq1} or \eqref{eq:th1eq2}, respectively.  By setting $n_r= 1$ in Theorem~\ref{th:firsttheorem}, one can obtain the outage probability for the Alamouti space-time code in case of the MISO under a jamming attack.
	\item The developed results in Theorem~\ref{th:firsttheorem} and (\ref{eq:miso_asym_wj})-(\ref{eq:mimo_asym_wj}) can be useful to study the benefits of multiple antenna techniques in achieving desired performance with less power budget under jamming attack. In particular, it is found that the MIMO system with Alamouti coding can provide superior performance at less power budget in comparison to other antenna configurations such as MISO (See Figs~\ref{subfig-1:figure3} and \ref{subfig-1:figure5} in Sec~\ref{sec:result}).  
\end{enumerate} 
\section{Stability Analysis for Multi-antenna System under Jamming Attack}	
In this section, the average service rate of the point-to-point MIMO system with random arrival of data at the transmitter under jamming attack is derived. As the attacker has energy harvesting ability, the impact of jamming on the service rate is affected by the energy arrival rate at the jammer and the capacity of the battery. The results are derived for the battery with unlimited and limited capacity at the jammer. The outage probabilities derived in the previous section are used for determining the successful decoding of the data at the receiver. 
\vspace{-8pt}
\subsection{Battery with unlimited capacity}
To determine the average service rate in this case, it is required to model how the state of the energy buffer at the jammer evolves. The evolution of the energy buffer can be described by the Markov chain, as shown in Fig.~\ref{fig:markov-chain}. The evolution of the Markov chain is characterized by the probability of jamming $(p_J)$ and the energy arrival rate at the jammer $(\delta)$. To determine the average service rate with infinite energy buffer at the jammer, the following cases are considered: (a) probability of jamming is more than the energy arrival rate at the jammer, and (b) probability of jamming is less than the energy arrival rate at the jammer. The reason for considering these two cases is explained below. In the first case, the system is limited by its harvested energy; thus, it is not always available to create interference, even if needed. In the second case, since the jamming probability is less than the energy arrival rate, the jammer is operating as if it was connected to the power grid without energy limitations.

The average service rate with infinite energy buffer at the jammer is stated in the following theorem. 
\vspace{-7pt}
\begin{figure}[h]
	\centering
	\includegraphics[trim={0cm 0cm 0cm 0cm},clip, height=.7 in,width=3.4 in]{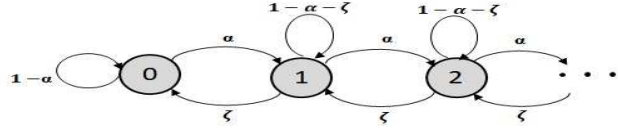}
	\setlength{\abovecaptionskip}{-1pt}
	\setlength{\belowcaptionskip}{-12pt} %
	\caption{State transition diagram of infinite capacity  energy buffer at jammer, where $\zeta=(1-\delta)p_J$, and $\alpha=\delta(1-p_J)$.} \label{fig:markov-chain}
\end{figure}
\begin{theorem}\label{th:secondtheorem}
The average service rate of the system when the jammer has infinite energy buffer is as follows:
\begin{enumerate}
	\item When $p_J \geq \delta$
\begin{equation}
\mu  = \hspace{1mm} \bigg(1- \frac{(1-p_J)\delta}{(1-\delta )}\bigg)(1 - p_{WoJ}^{\text{out}}) +  \bigg(\frac{(1-p_J)\delta}{(1-\delta )}\bigg) (1- p_J^{\text{out}}).\label{eq:th2eq1}
\end{equation}
\item When $p_J < \delta$
\begin{equation}
	\mu  = (1-p_J)(1 - p_{WoJ}^{\text{out}}) + \hspace{1mm} p_J(1- p_J^{\text{out}}). \label{eq:th2eq2}
	\end{equation}
\end{enumerate}
The average service rate for MISO, SIMO, and MIMO with Alamouti coding can be obtained by using the results on outage probability from Theorem~\ref{th:firsttheorem}. When $\lambda < \mu$, the system is said to be stable. 
\end{theorem}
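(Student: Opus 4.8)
The plan is to show that an energy-harvesting jammer affects the receiver exactly like a constant jammer that jams each slot with a \emph{reduced} probability $p_{\mathrm{eff}}$, and then to read the service rate off from \eqref{eq:servicerate1} with $p_J$ replaced by $p_{\mathrm{eff}}$. A slot is jammed iff the attacker's i.i.d.\ $\mathrm{Bernoulli}(p_J)$ ``jam'' decision comes up \emph{and} the battery is non-empty; since that decision is independent of the start-of-slot battery state, the long-run fraction of jammed slots is $p_{\mathrm{eff}} = p_J\cdot\Pr[\text{battery non-empty}]$. So everything reduces to finding the stationary probability that the buffer is non-empty for the birth--death chain of Fig.~\ref{fig:markov-chain}: from a non-empty state the buffer moves up by one with probability $\alpha=\delta(1-p_J)$ and down by one with probability $\zeta=(1-\delta)p_J$ (equivalently, the jammer jams with probability $p_J$ at every non-empty state, consuming one unit unless it is replenished the same slot), while from state $0$ it moves up with probability $\alpha$ and is unchanged otherwise.

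For the regime $p_J\geq\delta$ I would argue as follows. Here $\zeta\geq\alpha$, so the walk is positive recurrent when $p_J>\delta$; being a birth--death chain it is reversible, so the stationary law obeys the detailed-balance recursion $\pi_{k+1}\zeta = \pi_k\alpha$, i.e.\ $\pi_k=\pi_0(\alpha/\zeta)^k$. Summing the geometric series gives $\pi_0 = 1-\alpha/\zeta$, hence
\[
\Pr[\text{battery non-empty}] = \frac{\alpha}{\zeta}=\frac{\delta(1-p_J)}{(1-\delta)\,p_J},\qquad p_{\mathrm{eff}} = p_J\cdot\frac{\alpha}{\zeta}=\frac{\delta(1-p_J)}{1-\delta}.
\]
Substituting $p_J\mapsto p_{\mathrm{eff}}$ into \eqref{eq:servicerate1} yields \eqref{eq:th2eq1}; the boundary $p_J=\delta$ is dispatched by continuity, since \eqref{eq:th2eq1} and \eqref{eq:th2eq2} both return $p_{\mathrm{eff}}=\delta$ there.

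For the regime $p_J<\delta$ I would instead use transience: now $\alpha>\zeta$, the walk has strictly positive drift off the boundary, so $B_t\to\infty$ almost surely and the buffer is empty only in an almost surely finite number of slots. Hence the long-run fraction of non-empty slots is $1$, $p_{\mathrm{eff}}=p_J$, and \eqref{eq:servicerate1} is already \eqref{eq:th2eq2} --- the regime in which the jammer behaves as if grid-powered. Finally, the per-configuration rates follow by inserting $p_J^{\mathrm{out}}$ from Theorem~\ref{th:firsttheorem} and the $P_J\to 0$ specialization $p_{WoJ}^{\mathrm{out}}$ of the same theorem, and $\lambda<\mu\Rightarrow$ stability is the (Loynes-type) criterion stated after Definition~1.

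The one delicate point --- and the part I expect to be the main obstacle --- is identifying $p_{\mathrm{eff}}$ correctly: one must see that the relevant quantity is $p_J$ scaled by the stationary probability of a non-empty buffer (not $p_J$ itself), and one must set up the chain's boundary behaviour with care. In the energy-limited regime a harvested unit is effectively lost in any slot where the jammer is in jam-mode but the buffer is empty, which is precisely why $p_{\mathrm{eff}}=\delta(1-p_J)/(1-\delta)$ comes out strictly below the energy-arrival rate $\delta$; a naive ``energy-in equals energy-used'' balance would give the wrong answer. Once $p_{\mathrm{eff}}$ is in hand the remaining steps --- detailed balance, the geometric sum, and the positive-recurrent/transient dichotomy --- are routine.
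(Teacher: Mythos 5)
Your proposal is correct and follows essentially the same route as the paper: model the buffer as the birth--death chain of Fig.~\ref{fig:markov-chain}, use detailed balance to get the geometric stationary law with $\Pr[B\neq 0]=\alpha/\zeta$ when $p_J\geq\delta$, replace $p_J$ by $p_J\Pr[B\neq 0]$ in \eqref{eq:servicerate1}, and set $\Pr[B=0]=0$ when $p_J<\delta$. Your added care about transience in the $p_J<\delta$ regime and the continuity check at $p_J=\delta$ are minor refinements of the paper's (more terse) argument, not a different approach.
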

\begin{proof}
To determine the average service rate in the presence of an energy harvesting jammer, the following cases are considered. \\
\textit{Case 1 (When jammer is sending energy packets at a rate ($ p_J$) greater than its arrival rate ($\delta $)): }
If $\delta \leq p_J$, then the jammer may not always have sufficient energy to disrupt the communication, and the jammer's energy buffer can be empty. To characterize the service rate, it is required to determine the probability that the energy buffer is empty, i.e., $Pr(B=0)$, where $B$ denotes the size of the energy buffer.
 From Fig.~\ref{fig:markov-chain}, one can see that the steady-state distribution need to satisfy the following
\begin{equation}
\begin{split}
& \zeta \pi_{i+1} = \alpha \pi_{i} ~~~~\text{for} ~~ 0 \leq i \leq B-1. \nonumber \\
\end{split}
\end{equation}
Normalizing the probabilities, following is obtained
 \begin{equation} 
 \begin{split}
\pi_{0} \sum_{i=0}^{\infty} \big(\frac{\alpha}{\zeta}\big)^i =1,~ 
\text{or } \pi_{0} = 1- \frac{\alpha}{\zeta},
\end{split}
\end{equation}
where $\pi_{i}$ is the probability of being in the $i^{th}$ state. Therefore,  probability that the jammer energy buffer can be empty is given by
\begin{equation}
\pi_{0}=Pr(B=0) = 1- \frac{(1-p_J)\delta}{(1-\delta )p_J}=1-\frac{\alpha}{\zeta}.
\end{equation}
Hence, the probability that energy buffer is not empty is given by the following
\begin{equation}
Pr(B\neq0) =  \frac{(1-p_J)\delta}{(1-\delta )p_J} =\frac{\alpha}{\zeta}.
\end{equation}
Using the above, the service rate of the system becomes
\begin{equation}
\begin{split}
&\mu= (1-p_J Pr(B\neq0)) (1 - p_{WoJ}^{\text{out}}) ~ + ~ p_J Pr(B\neq0) (1- p_J^{\text{out}}), \\
&= \bigg(1- \frac{(1-p_J)\delta}{(1-\delta )}\bigg)(1 - p_{WoJ}^{\text{out}}) ~  + ~ \bigg(\frac{(1-p_J)\delta}{(1-\delta )}\bigg) (1- p_J^{\text{out}}).\\
&= \bigg(1- \frac{\alpha}{(1-\delta )}\bigg)(1 - p_{WoJ}^{\text{out}}) ~  + ~ \bigg(\frac{\alpha}{(1-\delta )}\bigg) (1- p_J^{\text{out}}).
\end{split}\label{eq:serratecase-2}
\end{equation}
Note that the outage probability without jamming $(p_{WoJ}^{\text{out}})$ can be obtained by setting $P_J=0$ for different antenna configurations in Theorem~\ref{th:firsttheorem}.\\
\textit{Case 2 (When jammer is sending energy packets at a rate ($ p_J$) less than its arrival rate ($\delta $)):} If $\delta \geq p_J$, then the jammer always have energy to send the jamming signal and the probability that the energy buffer is empty is $0$, i.e., $Pr(B=0)=0$. Hence, the service rate of the system can be written as:
\vspace{-8pt}
\begin{align}
& \mu = \hspace{1mm}(1-p_J)\hspace{1mm}(1 - p_{WoJ}^{\text{out}}) + \hspace{1mm} p_J(1- p_J^{\text{out}}). \label{eq:serratecase-1}
\end{align}\vspace{-20pt}
\end{proof} \vspace{-12pt}
\subsection{Battery with Finite Capacity}
Another important factor that determines the efficiency of the jammer is the capacity of the energy buffer. This section characterizes the service rate in the case of the finite energy buffer. Similar cases are considered as that in the case of a battery with unlimited capacity. Note that even when the energy arrival rate can be high at the jammer but limited battery capacity can hinder the jamming ability of the attacker. The evolution of the energy buffer of size $B$ at the jammer can be described by the Markov chain, as shown in Fig.~\ref{fig:markov-chain_finite}. The evolution of the Markov chain is characterized by the probability of jamming and the energy arrival rate at the attacker.
\begin{figure}[h]
	\centering
	\includegraphics[trim={0cm 0cm 0cm 0cm},clip, height=.9 in,width=3.4 in]{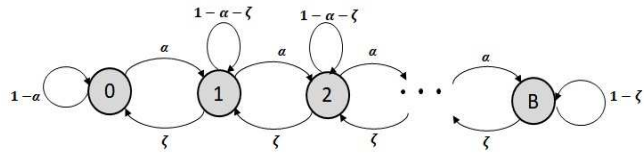}
	\setlength{\abovecaptionskip}{-6pt}
	\setlength{\belowcaptionskip}{-15pt} %
\caption{State transition diagram of finite energy buffer at the jammer, where $\zeta=(1-\delta)p_J$, and $\alpha=\delta(1-p_J)$.}\label{fig:markov-chain_finite}
\end{figure}

The average service rate with finite energy battery at the jammer is stated in the following theorem.

\begin{theorem}\label{th:thirdtheorem}
The average service rate of the system when the jammer has finite energy buffer is as follows:
	\begin{enumerate}
		\item When $p_J \geq \delta$
		\begin{align}
		&\mu  =  \bigg(1- p_J\bigg(1-\frac{(p_J-\delta){\zeta}^B}{\zeta^{B+1}-\alpha^{B+1}}\bigg)\bigg)(1 - p_{WoJ}^{\text{out}}) +~p_J\bigg(1-\frac{(p_J-\delta){\zeta}^B}{\zeta^{B+1}-\alpha^{B+1}}\bigg) (1- p_J^{\text{out}}).\label{eq:th3eq1}
		\end{align}
		\item When $p_J < \delta$
		\begin{align}
		& \mu = (1-p_J)(1 - p_{WoJ}^{\text{out}}) +  p_J(1- p_J^{\text{out}}). \label{eq:th3eq2}
		\end{align}
	\end{enumerate}
	Using the results on outage probability from Theorem~\ref{th:firsttheorem}, the average service rate for MISO, SIMO, and MIMO with Alamouti coding can be obtained. For $\lambda <\mu$, the system is stable.	
\end{theorem}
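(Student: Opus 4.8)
The plan is to follow the same route as the proof of Theorem~\ref{th:secondtheorem}: analyze the steady state of the jammer's energy buffer, read off the probability that it is empty, and substitute into the total-probability expression for $\mu$ (cf.\ \eqref{eq:serratecase-2}). The only change is that the birth--death chain is now \emph{finite}, on the states $\{0,1,\dots,B\}$ of Fig.~\ref{fig:markov-chain_finite}. I model the buffer content as follows: it increases by one when energy arrives but no jamming occurs (probability $\alpha=\delta(1-p_J)$), decreases by one when jamming occurs but no energy arrives (probability $\zeta=(1-\delta)p_J$), and is unchanged otherwise; at state $0$ there is no down-step, and at state $B$ an arriving energy packet is discarded, so there is no up-step. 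Because the chain is birth--death it is reversible, and its stationary vector satisfies the local-balance equations $\alpha\pi_i=\zeta\pi_{i+1}$ for $0\le i\le B-1$ (the self-loops are irrelevant), exactly as in the infinite case; hence $\pi_i=(\alpha/\zeta)^i\pi_0$.

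Next I normalize over the finite geometric sum, $\pi_0\sum_{i=0}^{B}(\alpha/\zeta)^i=1$, which gives $\pi_0=\dfrac{1-\alpha/\zeta}{1-(\alpha/\zeta)^{B+1}}$. Multiplying numerator and denominator by $\zeta^{B+1}$ and using the identity $\zeta-\alpha=(1-\delta)p_J-\delta(1-p_J)=p_J-\delta$ collapses this to
\[
\Pr(B=0)=\pi_0=\frac{(p_J-\delta)\,\zeta^{B}}{\zeta^{B+1}-\alpha^{B+1}},\qquad \Pr(B\neq 0)=1-\frac{(p_J-\delta)\,\zeta^{B}}{\zeta^{B+1}-\alpha^{B+1}}.
\]
When $p_J\ge\delta$ one has $p_J-\delta\ge 0$ and $\zeta\ge\alpha$, so $\pi_0\in[0,1]$, as it must be.

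A transmitted packet experiences interference exactly when the jammer both decides to transmit (probability $p_J$, independent of the buffer state and of the fading) and actually has energy ($B\neq0$); so the slot is in outage with probability $p_J^{\text{out}}$ with probability $p_J\Pr(B\neq0)$, and with probability $p_{WoJ}^{\text{out}}$ otherwise. Substituting $\Pr(B\neq0)$ from the previous step into $\mu=\bigl(1-p_J\Pr(B\neq0)\bigr)(1-p_{WoJ}^{\text{out}})+p_J\Pr(B\neq0)(1-p_J^{\text{out}})$ yields \eqref{eq:th3eq1}. For $p_J<\delta$ the chain drifts toward the full state, the buffer is essentially never empty, and, just as in Case~2 of Theorem~\ref{th:secondtheorem}, one sets $\Pr(B=0)=0$, which reduces $\mu$ to \eqref{eq:th3eq2}. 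The MISO, SIMO and Alamouti-MIMO expressions then follow by inserting the corresponding $p_J^{\text{out}}$ (and $p_{WoJ}^{\text{out}}$, obtained by setting $P_J=0$) from Theorem~\ref{th:firsttheorem}, and stability under $\lambda<\mu$ is the Loynes-type criterion recalled in Section~II-C.

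Nothing in this argument is deep; the points that require care are handling the two boundary states of the finite chain so that local balance still holds up to $i=B-1$ (no down-transition at $0$, dropped arrivals at $B$), and the algebraic simplification of $\pi_0$ to the closed form appearing in \eqref{eq:th3eq1} through the relation $\zeta-\alpha=p_J-\delta$. The one conceptual caveat I expect to flag is that a finite recurrent chain always has $\Pr(B=0)>0$, so the $p_J<\delta$ bullet should be read as the (large-$B$ accurate) approximation inherited from the infinite-buffer analysis rather than an exact identity.
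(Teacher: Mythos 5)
Your proposal follows exactly the route of the paper's proof: solve the local-balance equations of the finite birth--death chain in Fig.~\ref{fig:markov-chain_finite} to get $\pi_0=\frac{(p_J-\delta)\zeta^B}{\zeta^{B+1}-\alpha^{B+1}}$, substitute $\Pr(B\neq 0)$ into the total-probability form of $\mu$, and set $\Pr(B=0)=0$ when $p_J<\delta$; in fact you supply the normalization and the $\zeta-\alpha=p_J-\delta$ simplification that the paper leaves implicit. Your closing caveat --- that a finite recurrent chain has $\Pr(B=0)>0$ even when $p_J<\delta$, so \eqref{eq:th3eq2} is an approximation rather than an exact identity --- is a correct observation the paper does not make, but it does not affect the validity of your argument for \eqref{eq:th3eq1}.
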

\begin{proof}
	To determine the average service rate in the presence of an energy harvesting jammer with finite battery, the following cases are considered. \\
	\textit{Case 1 (When jammer is sending energy packets at a rate ($ p_J$) greater than its arrival rate ($\delta $)): }
	If $\delta \leq p_J$, then the jammer may not have sufficient energy to disrupt the ongoing communication as its energy buffer can be empty. In this case,  the probability that energy buffer at the jammer is empty depends also on the capacity of the energy buffer. This probability is obtained by solving the balance equation of the Markov chain for finite battery and is given by
	\begin{equation}
	Pr(B=0) = \frac{(p_J-\delta){\zeta}^B}{\zeta^{B+1}-\alpha^{B+1}}.
	\end{equation}
	Hence, the probability that jammer energy buffer is not empty is
	\begin{equation}
	Pr(B\neq0) =  1-\frac{(p_J-\delta){\zeta}^B}{\zeta^{B+1}-\alpha^{B+1}}.
	\end{equation}
	Then, the service rate of the system is
	\begin{align}
	& \mu  =  \bigg(1- p_J\bigg(1-\frac{(p_J-\delta){\zeta}^B}{\zeta^{B+1}-\alpha^{B+1}}\bigg)\bigg)(1 - p_{WoJ}^{\text{out}}) +~p_J\bigg(1-\frac{(p_J-\delta){\zeta}^B}{\zeta^{B+1}-\alpha^{B+1}}\bigg) (1- p_J^{\text{out}}).
	\end{align}
\textit{Case 2 (When jammer is sending energy packets at a rate ($ p_J$) less than its arrival rate ($\delta $)):} If $\delta > p_J$, then the jammer always have energy to send the jamming signal, i.e., $Pr(B=0)=0$. 
Hence, the service rate of the system can be written as given in ( \ref{eq:th3eq2}). 
\end{proof}
\textit{\textbf{Remarks:}}
\begin{enumerate}
\item When $\delta > p_J$, it is observed from \eqref{eq:th2eq2} and \eqref{eq:th3eq2} that the performance of battery with finite capacity and infinite capacity are same. In this case, having a battery with infinite capacity does not help the jammer. 

\item As a special case, one can obtain the service rate for different MIMO configurations without jamming by substituting $p_J = 0$ either in a finite or infinite battery capacity case. To the best of the authors' knowledge, the service rate of a MIMO system with an Alamouti coding scheme has not been characterized in existing results even without jamming. 

\item When $\delta = 1$ and $p_J = 1$, one can obtain the service rate of the considered system model under jamming attack, where the jammer always disrupts the communication.
\end{enumerate}
\vspace{-6pt}
\section{Delay and AAoI Analysis under Jamming Attack}
For the considered system model with bursty traffic, besides service rate, the delay is another important metric when the transmitter has time-sensitive data. Furthermore, AoI, which is a more general form of latency, is also important since, in status updating systems (common in IoT or cyber-physical systems), the freshness of information is crucial. As jamming is one of the common DoS (Denial of Service) attacks, it is important to evaluate these metrics in such cases. It is desirable to have a low delay and small AoI. The AoI captures the freshness in information, whereas the delay captures the queuing delay and transmission delay between the transmitter and the receiver. In the following, average delay and average AoI (AAoI) are characterized for different MIMO configurations. The developed results help to investigate the interplay between delay and AoI for multi-antenna configurations when the attacker has a battery with finite or infinite capacity. To the best of the authors' knowledge, the role of multiple antennas in minimizing delay and AAoI under jamming attack has not been explored in the existing literature.
\vspace{-10pt}
\subsection{Average Delay} 
The delay consists of two components, the queuing delay and the transmission delay from the transmitter to the receiver. The average transmission delay $(D_T)$ is inversely proportional to the average service probability and is given by
\begin{equation}
	D_T= \frac{1}{\mu}. \label{eq:delay1]}
\end{equation}
To determine the queuing delay, it is required to use the average queue length which is given by
\begin{equation} 
\begin{split}
	\bar{Q} & = \frac{\lambda (1-\lambda)}{\mu - \lambda}. \label{eq:delay2}
	\end{split}
\end{equation}
Using Little's theorem, the queuing delay $(D_Q)$ is expressed as follows
\begin{equation}
	D_Q= \frac{\bar{Q}}{\lambda}.  \label{eq:delay3}
\end{equation}
The average packet delay (D) using (\ref{eq:delay2}) and (\ref{eq:delay3}) becomes
	\begin{equation}
	\begin{split}
	D & = D_T + D_Q = \frac{1}{\mu } +\frac{1-\lambda}{\mu - \lambda}. \label{eq:average_delay} 
	\end{split}
	\end{equation}
One can obtain the average packet delay when the jammer has infinite and finite battery capacity using the service rate expressions from Theorems~\ref{th:secondtheorem} and \ref{th:thirdtheorem} in \eqref{eq:average_delay}, respectively. 
\vspace{-10pt}
\subsection{Average Age of Information (AAoI)} 
The AAoI metric captures the freshness of information about a source at a remote destination. The objective here is to understand the effect of the multiple antennas at the transmitter and receiver on the AAoI under jamming attack. The considered queuing model in this paper is a Geo/Geo/1 queue, and the AAoI is given by 
\begin{equation}\label{AAoI} 
\begin{split}
AAoI & =\frac{1}{\lambda}+\frac{1 - \lambda}{\mu-\lambda}- \frac{\lambda}{{\mu}^2}+\frac{\lambda}{\mu}.
\end{split}
\end{equation}
The proof for the previous expression can be found in \cite{8764468}. One can obtain the AAoI when the attacker has infinite and finite battery capacity using the service rate expressions from  Theorems~\ref{th:secondtheorem} and \ref{th:thirdtheorem} in \eqref{AAoI}, respectively. We would like to emphasize that the purpose here is to study the effect of multiple antennas, jamming attacks, and the Alamouti coding scheme on the AAoI and not to derive the metric from scratch. Furthermore, the results obtained in this work can be utilized in other models for AAoI, for example, at the generate at-will policy \cite{8815559} or in different queuing disciplines with or without packet management \cite{8764468}.
\vspace{-10pt}
\subsection{Minimization of AAoI}
In many applications, where the transmitter has time-critical data, it is required to minimize average delay and AAoI. It is desirable to support a high arrival rate even under a jamming attack. However, Fig~\ref{subfig-1:figure10} shows that a low or high arrival rate can increase the AAoI at the receiver. Furthermore, this is in contrast to the behavior of the delay against arrival rate in Fig~\ref{subfig-1:figure9}. Minimizing delay does not necessarily imply minimization of AAoI and vice-versa. This motivates to minimize the AAoI with respect to arrival rate for different antenna configurations under jamming attack. When there is also a delay constraint on the traffic, the arrival rate that minimizes AAoI can be different from the solution to the optimization problem, where it is only required to minimize the AAoI without any delay constraint. To get further insights into this problem, the following optimization problems are considered.
\subsubsection{Minimization of AAoI for Delay-Tolerant System} In this case, the objective is to minimize AAoI with respect to the arrival rate provided the queue remains stable and there is no constraint on the delay. The optimization problem is stated in the following.
\begin{equation}\label{AAoI_min_state}
\begin{aligned}
& \underset{\lambda}{\text{minimize}}
& & \mathrm{AAoI} ~ \text{in} ~\eqref{AAoI},  \quad \text{such that} \quad  \lambda < \mu.
\end{aligned}
\end{equation}
The constraint ensures that the queue remains stable. We can solve this problem as follows.
\begin{equation}
\begin{aligned}
& \frac{\partial AAoI}{\partial \lambda} = 0,
\end{aligned}
\end{equation}
\begin{equation}\label{AAoI_opt_1}
\begin{aligned}
\text{or } & \frac{\mu-1}{\mu^2}+\frac{1-\mu}{(\mu-\lambda)^2} -\frac{1}{\lambda^2}= 0,\\ 
\text{or } &\lambda^4(\mu-1)+\lambda^3(2\mu-2\mu^2)-\lambda^2\mu^2+2\lambda\mu^3-\mu^4=0.\\ 
\end{aligned}
\end{equation}
The value of $\lambda$ is chosen which satisfies the above equation \eqref{AAoI_opt_1} and also, does not violate the stability condition of the queue. 

\subsubsection{Minimization of AAoI for Delay-Sensitive System} In this case, it is required to minimize the AAoI with respect to the arrival rate provided the average delay in the system should not exceed a threshold ($D_{th}$) and the queue remains stable. The optimization problem is stated in the following:
\begin{equation}\label{AAoI_opt_2}
\begin{aligned}
& \underset{\lambda}{\text{minimize}}
& & \mathrm{AAoI} ~ \text{in} ~\eqref{AAoI},
~\text{such that} ~ \lambda < \mu, \text{ and } D \leq D_{th} ~\text{in} ~\eqref{eq:average_delay}.
\end{aligned}
\end{equation}
The above optimization problem is solved numerically and discussed in Sec~\ref{sec:result}. For both the optimization problems, the service rate depends on the number of antennas at the legitimate nodes,  battery capacity at the jammer, and other parameters of the system (See Theorem~\ref{th:secondtheorem}).
\vspace{-4pt}
\section{Stability Region for the 2-user SIMO Broadcast Channel}
 \begin{figure}[!ht]
	\centering
	\includegraphics[trim={0cm 0cm 0cm 0cm},clip, height=1.3in,width=3.4 in] {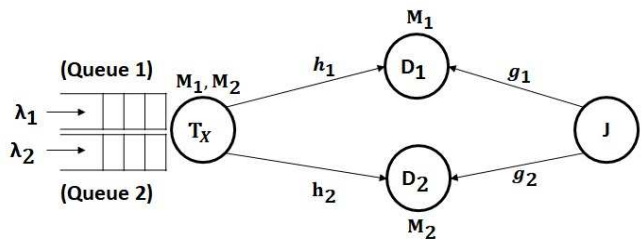}
	\setlength{\belowcaptionskip}{-12pt}
	\caption{2-user broadcast channel with the jammer ($J$).}\label{fig:Broadcast}
\end{figure}
 In this section, it is shown how the results developed for the point-to-point system  help to characterize the stability region for the 2-user SIMO broadcast channel in the presence of energy harvesting jammer, where the receiver has $n_r$ antennas. The transmitter has two queues to store the incoming traffic and the packet stored at $i^{\text{th}}$ queue needs to be delivered to the $D_i^{\text{th}}$ receiver (See Fig.~\ref{fig:Broadcast}). Each receiver decodes its intended packet by treating other user's messages as noise. Due to lack of space, the stability region is characterized for the case when jammer has unlimited capacity and $\delta>p_j$. In this case, jammer energy queue is non-empty. The received signal at $i^{th}$ ($i \in \{1,2\}$) receiver is given by
 \begin{equation}
 	\mathbf{y}_i = \mathbf{h}_i x +g_i x_J + z_i,  
 \end{equation}
 where $\mathbf{h}_i \in \mathcal{C}^{n_r \times 1}$  is the channel between transmitter and the $i^{\text{th}}$ receiver and it undergoes Rayleigh fading. Hence, $|\mathbf{h}_i|^2$ follows Chi-Square distribution with $2n_r$ degrees of freedom. The channel $g_i \in \mathcal{C}$ is the channel between the attacker and $i^{\text{th}}$ receiver. The channel $g_i$ also undergoes Rayleigh fading and hence, $|g_i|^2$ follows an exponential distribution. The noise at receiver~$i$ is distributed as $z_i\sim \mathcal{CN}(0, 1)$. The input signal sent to the channel is $x = x_1 + x_2$ (when both queues are non-empty) or $x = x_i$ (when the only ith queue has a packet to send). 
 
 The packet arrival processes at the first and the second queue are assumed to be independent and stationary with  mean rates $\lambda_1$ and $\lambda_2$ in packets per slot, respectively. Both queues have an infinite capacity to store incoming packets and $\bar{Q}_i$ denotes the size in the  number of packets of the i-th queue. All other assumptions remain the same as that of the point to point MIMO model. The average service rate at the queues are given by
 \vspace{-2pt}
 \begin{align}
 & \mu_1=Pr(\bar{Q}_2>0) Pr(D_{1/1,2}) + Pr(\bar{Q}_2 = 0) Pr(D_{1/1}),  \label{broadcast_mu_1} \\
 & \mu_2=Pr(\bar{Q}_1>0) Pr(D_{2/1,2})+Pr(\bar{Q}_1=0) Pr(D_{2/2}). \label{broadcast_mu_2}
 \end{align}
 where $D_{i/\tau}$ denote the event that receiver~$i$ is able to decode the packet transmitted from the i-th queue of the transmitter given a set of non-empty queues denoted by $\tau$ ($\tau\in \{1, 2\}$).  Due to the interaction between the queues, the stochastic dominance technique \cite{21216} is used. There are two dominant systems. In the first dominant system, when the first queue is empty, then the source transmits a dummy packet for receiver~$1$, while the second queue behaves in the same way as in the original system. In the second dominant system, when the second queue is empty, then the source transmits a dummy packet for receiver~$2$, while the first queue behaves in the same way as in the original system.  Following the process, as given in \cite{8320826}, the stability region for both the dominant systems are 
  \begin{figure*}	
  \begin{align}
   &\mathcal{R}_1 = \left\{(\lambda_1,\lambda_2): \frac{\lambda_1}{Pr(D_{1/1})}+ \frac{Pr(D_{1/1})-Pr(D_{1/1,2})}{Pr(D_{1/1})Pr(D_{2/1,2})}\lambda_2<1, \lambda_2<Pr(D_{2/1,2})\right\}, \label{rate_1} \\
 &\mathcal{R}_2 = \left\{(\lambda_1,\lambda_2): \frac{\lambda_2}{Pr(D_{2/2})}+ \frac{Pr(D_{2/2})-Pr(D_{2/1,2})}{Pr(D_{2/2})Pr(D_{1/1,2})}\lambda_1<1, \lambda_1<Pr(D_{1/1,2})\right\}. \label{rate_2}
\end{align}
\end{figure*} 
 given by $\mathcal{R} = \mathcal{R}_1 \cup \mathcal{R}_2$, where $\mathcal{R}_1$ and $\mathcal{R}_2$ are given by \eqref{rate_1} and \eqref{rate_2}, respectively. To determine the stability region, it is required to determine the various success probabilities. These are obtained using the outage probabilities derived for SIMO in Section~\ref{sec:outage}.  The ratio of signal power to noise power along with jamming and/or interference at receiver~$i$ is given as
 \begin{equation}
 SJINR_i = \frac{P_i |\mathbf{h}_i|^2}{1 + P_J |g_i|^2 + P_j |\mathbf{h}_i|^2},\text{and } SJNR_i =  \frac{P_i |\mathbf{h}_i|^2}{1 + P_J |g_i|^2}. \\
\end{equation}
where $i, j \in \{1, 2\}$  and $ i \neq j$.\\
Similarly, $SINR_i$ (signal power to interference and noise power ratio) and $SNR_i$ (signal to noise ratio) at receiver $i$ are denoted as follows
\begin{equation}
SINR_i = \frac{P_i |\mathbf{h}_i|^2}{1 + P_j |\mathbf{h}_i|^2}, \text{ and } SNR_i = P_i |\mathbf{h}_i|^2. \\ 
\end{equation} 
where $i, j \in \{1, 2\}$ and $ i \neq j$. The success probabilities for both the receivers are obtained with and without jamming for different status of the queue in the following. 
\subsubsection{When $\bar{Q}_1=0$ and $\bar{Q}_2\neq0$} In this case, the second queue at the transmitter sends a packet for receiver 2 as $\bar{Q}_1=0$. The receiver 2 can decode its intended packet if one of the following events occur: 
 $ D_{2/2}^{\text{WoJ}} = \left\{ SNR_2\geq\gamma_2 \right\}$  (without jamming) and  $D_{2/2}^{\text{WJ}}=\left\{SJNR_2\geq\gamma_2\right\}$ (with jamming). The quantity $\gamma_i$ corresponds to the decoding threshold for successful decoding of the packet at receiver~$i$ and is related to the rate by the following relation: $\gamma_i = 2^{R_{i}}-1$ ($i$=1,2). It is not difficult to see that the $SJNR_i$ is similar to the SJNR in case of point-to-point model. Following a similar procedure as in Theorem~\ref{th:firsttheorem} (using \eqref{use_SuccPr}), the success probabilities can be obtained as given below. 
\begin{equation}
Pr\big(D_{2/2}^{\text{WoJ}}\Big)=Pr(SNR_2\geq\gamma_2)
=Pr\Big(|\mathbf{h}_2|^2\geq\frac{\gamma_2}{P_2}\Big)
=\frac{\Gamma(n_{r},\frac{\gamma_2}{P_{2}})}{\Gamma(n_{r})}.
\label{WoJ}
\end{equation}
Similarly, one can show the following
\begin{align}
&Pr\big(D_{2/2}^{\text{WJ}}\big)=Pr(SJNR_2\geq\gamma_2)
= Pr\Big(|g_2|^2\leq\frac{P_2 |\mathbf{h}_2|^2-\gamma_2}{\gamma_2 P_J }\Big),\nonumber\\
&=\frac{\Gamma(n_{r},\frac{\gamma_2}{P_{2}})}{\Gamma(n_{r})}-\frac{e^{\frac{1}{P_J}}\Gamma(n_{r},\frac{\gamma_2}{P_{2}}+\frac{1}{P_J})}{\Gamma(n_{r})(1+\frac{P_{2}}{\gamma_2 P_J})^{n_{r}}}.\label{WJ}
\end{align}
Using \eqref{WoJ} and \eqref{WJ}, the success probability at receiver~$2$ is given by
 \begin{align}
 &Pr(D_{2/2}) = (1-p_J)Pr\big( D_{2/2}^{\text{WoJ}}\big)+p_J Pr\big(D_{2/2}^{\text{WJ}}\big),\\
& = (1-p_J) \frac{\Gamma(n_{r},\frac{\gamma_2}{P_{2}})}{\Gamma(n_{r})}+p_J\Bigg(\frac{\Gamma(n_{r},\frac{\gamma_2}{P_{2}})}{\Gamma(n_{r})}-\frac{e^{\frac{1}{P_J}}\Gamma(n_{r},\frac{\gamma_2}{P_{2}}+\frac{1}{P_J})}{\Gamma(n_{r})(1+\frac{P_{2}}{\gamma_2 P_J})^{n_{r}}}\Bigg).
\end{align}
 \subsubsection{When $\bar{Q}_1\neq0$ and $\bar{Q}_2=0$} In this case, the first queue at the transmitter sends a packet for the receiver 1 as $\bar{Q}_2=0$. The receiver 1 can decode its intended packet if the following events are true.
 \begin{align} 
 &D_{1/1}^{\text{WoJ}}=\left\{ SNR_1\geq\gamma_1 \right\} ~\text{without jamming}, \nonumber\\
 &D_{1/1}^{\text{WJ}}=\left\{SJNR_1\geq\gamma_1\right\}~\text{with jamming.}
 \end{align}
Following a similar procedure as in Theorem~\ref{th:firsttheorem} (using \eqref{use_SuccPr}), the success probabilities are given as follows
 \begin{align}
& Pr(D_{1/1}) = (1-p_J)Pr\big( D_{1/1}^{\text{WoJ}}\big)+p_J Pr\big(D_{1/1}^{\text{WJ}}\big), \\
 &=(1-p_J)\frac{\Gamma(n_{r},\frac{\gamma_1}{P_{1}})}{\Gamma(n_{r})}+p_J\Bigg(\frac{\Gamma(n_{r},\frac{\gamma_1}{P_{1}})}{\Gamma(n_{r})}-\frac{e^{\frac{1}{P_J}}\Gamma(n_{r},\frac{\gamma_1}{P_{1}}+\frac{1}{P_J})}{\Gamma(n_{r})(1+\frac{P_{1}}{\gamma_1 P_J})^{n_{r}}}\Bigg). 
\end{align}
\subsubsection{When $\bar{Q}_1\neq0$ and $\bar{Q}_2\neq0$} In this case, both queues send packets. The packet sent by the first (second) queue should be decoded at receiver 1 (receiver 2), which are represented by the following events
\begin{align} 
&D_{1/1,2}^{\text{WoJ}}=\left\{ SINR_1\geq\gamma_1 \right\} ~\text{without jamming},\nonumber \\
&D_{1/1,2}^{\text{WJ}}=\left\{SJINR_1\geq\gamma_1\right\}~\text{with jamming.}
\end{align}
 In this case, although $SJINR_i$ corresponds to the ratio of signal power and noise power along with signal power of other user, with some rearrangement of terms in the outage probability calculation,  $SJINR_i$ can be presented in a similar form as that of the $SJNR_i$. Hence, following a similar procedure as in Theorem~\ref{th:firsttheorem} (using \eqref{use_SuccPr}), the success probabilities can be obtained as
\begin{align}
Pr(D_{1/1,2}) & = (1-p_J)Pr\big( D_{1/1,2}^{\text{WoJ}}\big)+p_J Pr\big(D_{1/1,2}^{\text{WJ}}\big),\\
&=(1-p_J)\frac{\Gamma(n_{r},\frac{\gamma_1}{P_{1}-\gamma_1 P_2})}{\Gamma(n_{r})}+~p_J\Big(\frac{\Gamma(n_{r},\frac{\gamma_1}{P_{1}-\gamma_1 P_2})}{\Gamma(n_{r})}-\frac{e^{\frac{1}{P_J}}\Gamma(n_{r},\frac{\gamma_1}{P_{1}-\gamma_1 P_2}+\frac{1}{P_J})}{\Gamma(n_{r})(1+\frac{P_{1}-\gamma_1 P_2}{\gamma_1 P_J})^{n_{r}}}\Big).
\end{align}
By following similar steps as in the case of receiver~$1$, the success probability for receiver~$2$ is given by the following expression
\begin{align}
Pr(D_{2/1,2}) &=(1-p_J)\frac{\Gamma(n_{r},\frac{\gamma_2}{P_{2}-\gamma_2 P_1})}{\Gamma(n_{r})} +~p_J\Big(\frac{\Gamma(n_{r},\frac{\gamma_2}{P_{2}-\gamma_2 P_2})}{\Gamma(n_{r})}-\frac{e^{\frac{1}{P_J}}\Gamma(n_{r},\frac{\gamma_2}{P_{2}-\gamma_2 P_1}+\frac{1}{P_J})}{\Gamma(n_{r})(1+\frac{P_{2}-\gamma_2 P_1}{\gamma_1 P_J})^{n_{r}}}\Big).
\end{align}
The events $D_{2/1,2}^{\text{WoJ}}$ and $D_{2/1,2}^{\text{WJ}}$ occurs with non zero probability if $P_2-\gamma_2 P_1>0$. The stability region can be obtained by substituting the success probabilities obtained for various cases in \eqref{rate_1} and \eqref{rate_2} and then, taking union of these two regions ($\mathcal{R}_1$ and $\mathcal{R}_2$).
\vspace{-3pt} 
\section{Results and Discussions}\label{sec:result}
In this section, numerical results are presented to illustrate the effect of jamming on the system performance for various parameter settings. 
\begin{figure*}[htt]
	\centering
	\subfloat[\label{subfig-1:figure3}]{\includegraphics[trim={0.2cm 0cm 0.2cm .5cm},clip, height=2.4 in,width=3.3 in]{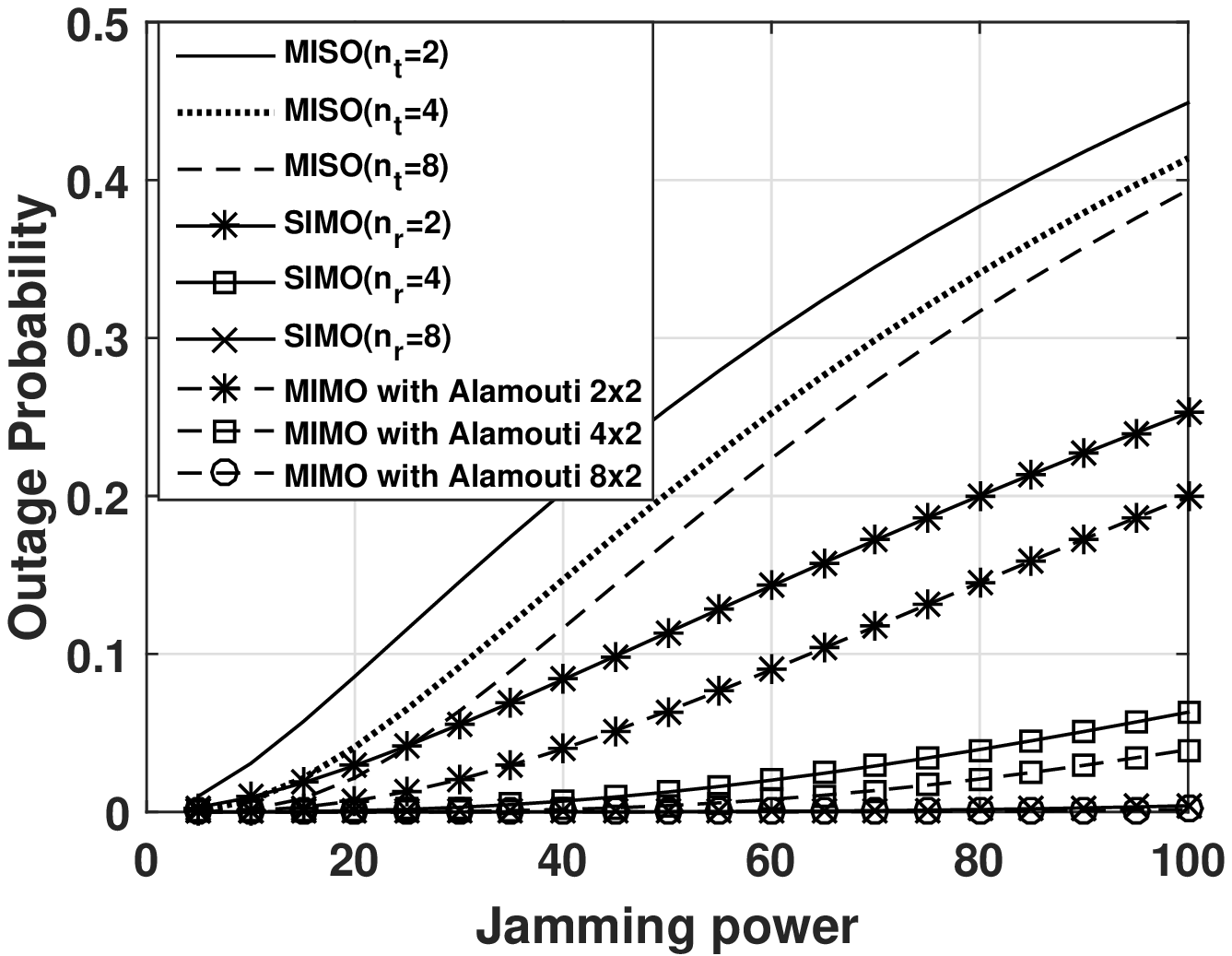}}
	\subfloat[\label{subfig-1:figure5}]{\includegraphics[trim={0.2cm 0cm 0.2cm .5cm},clip, height=2.4 in,width=3.3 in]{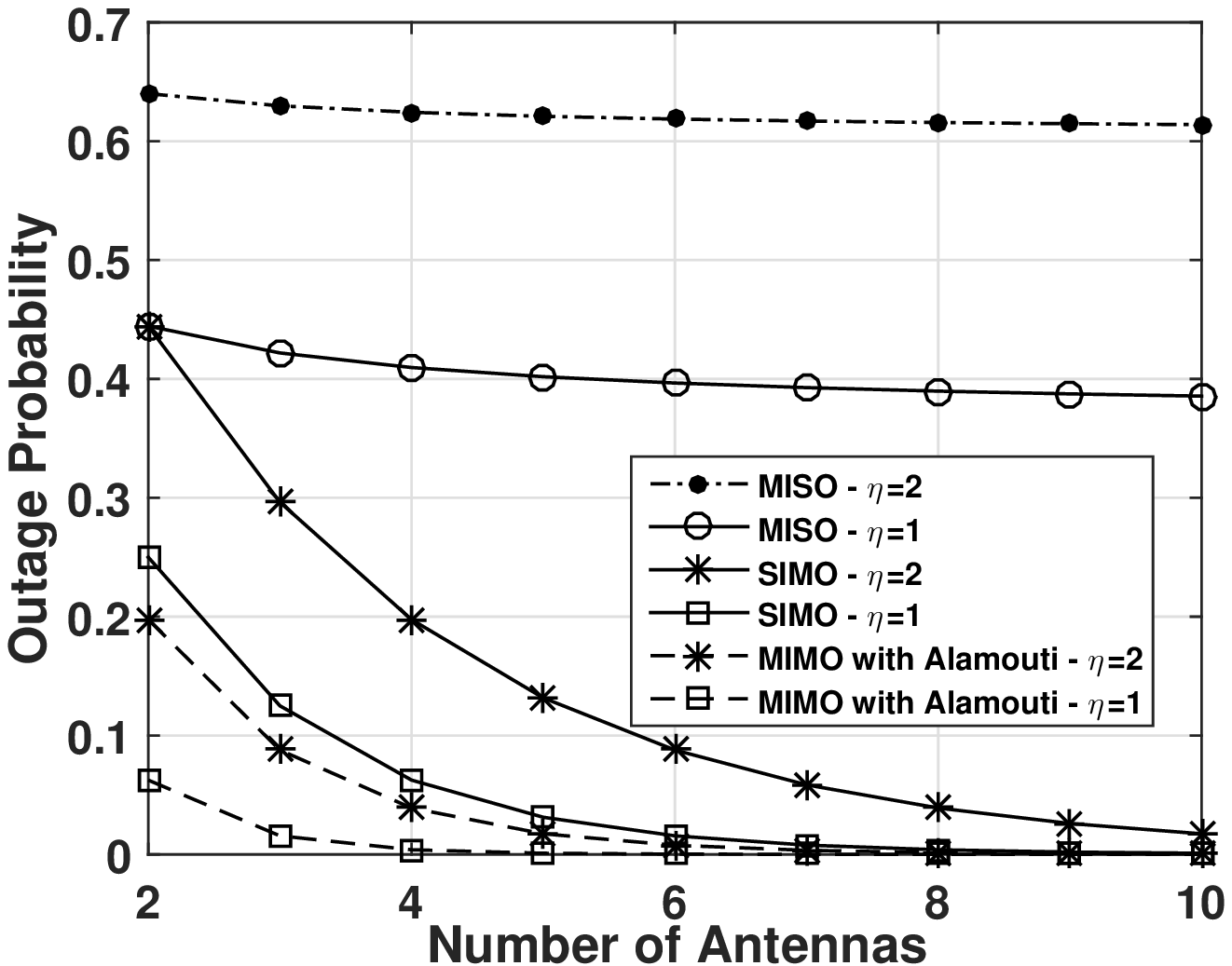}}
	\setlength{\abovecaptionskip}{-2pt}
	\setlength{\belowcaptionskip}{-15pt} %
	\caption{(a) Outage probability against jamming power with $P=20$ dB, and $R=1$; (b) Asymptotic outage probability against number of antennas $(R=1)$.}	
\end{figure*}


In Fig.~\ref{subfig-1:figure3}, the outage probability obtained in Theorem~\ref{th:firsttheorem} is plotted against jamming power for different MIMO configurations with $R=1$. Note that in this case, the attacker can always jam the  communication.\footnote{This is considered to explore how the multiple antennas help to mitigate jamming attack under Rayleigh fading scenario.} It can be observed that increasing the number of antennas at the transmitter in case of MISO does not improve the outage performance when the jamming power is increased at the transmitter. Due to the lack of CSI at the sender, the transmit diversity does not help in improving the performance. However, in the case of SIMO, the outage probability decreases with the increase in the number of antennas at the receiver, even when the jamming power increases. This improvement primarily comes from the gain in the receive diversity with the increase in the number of receive antennas. To investigate the time diversity along with space diversity, MIMO with Alamouti code is considered. It can be seen that MIMO with Alamouti code gives significant improvement compared to MISO or SIMO system.

 Fig.~\ref{subfig-1:figure5} shows the asymptotic outage probabilities obtained in \eqref{eq:miso_asym_wj}-\eqref{eq:mimo_asym_wj} against the number of antennas for a given value of $\eta$. For the MIMO with Alamouti code, the number of antennas at the transmitter is fixed at $n_t =2$, and the number of receive antennas $n_r$ is varied. From the plot, it can be observed MIMO with Alamouti code can achieve a very low value of outage probability even with a small number of antennas at the receiver or even when $\eta =2$. Due to the lack of CSI at the transmitter, outage probability is high in the case of MISO compared to SIMO or MIMO with Alamouti coding. From Fig.~\ref{subfig-1:figure3}, it can also be observed that in case of MISO system, the power budget or the number of antennas at the transmitter has to be increased significantly to meet the performance provided by SIMO or MIMO with Alamouti coding. Hence, exploiting time and space diversity can help to achieve the desired performance at a lower power budget even under jamming attack. 
\begin{figure}[h]
	\centering
	\includegraphics[trim={0.2cm 0cm 0.2cm .5cm},clip, height=2.3 in,width=3.3 in]{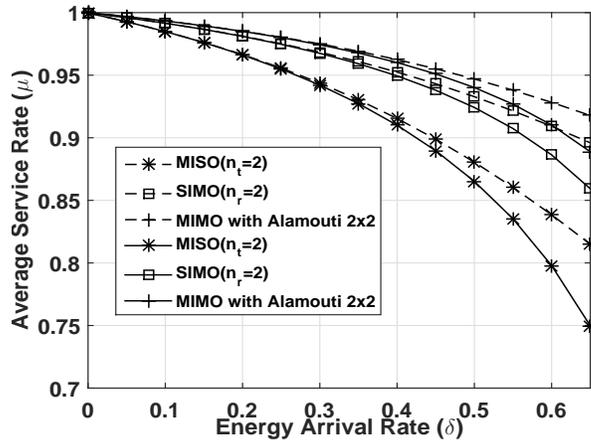}
	\setlength{\abovecaptionskip}{-4pt}
	\setlength{\belowcaptionskip}{-10pt}%
	\caption{Average service rate against energy arrival rate at jammer:
		$P=20$ dB, $P_J=20$ dB, $R=1$, $B=2$, $p_J=0.7$ and $\lambda=0.2$.}\label{fig:figure6}
\end{figure}
\begin{figure*}[!htt]
	\centering
	\subfloat[\label{subfig-1:figure7}]{\includegraphics[trim={0.2cm 0cm 0.2cm .5cm},clip, height=2.4 in,width=3.3 in]{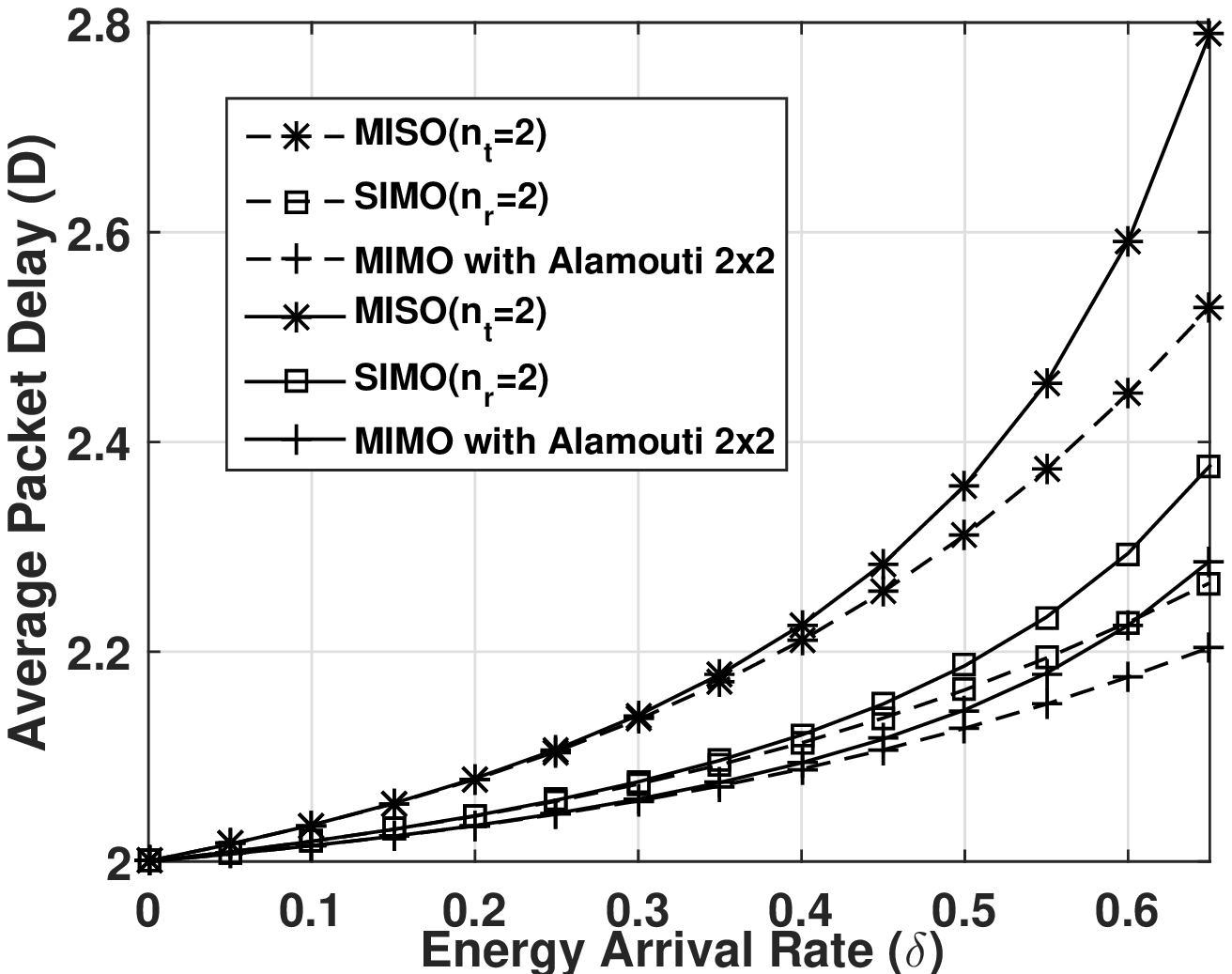}
	}
	\subfloat[\label{subfig-1:figure8}]{\includegraphics[trim={0.2cm 0cm 0.2cm .5cm},clip, height=2.4 in,width=3.3 in]{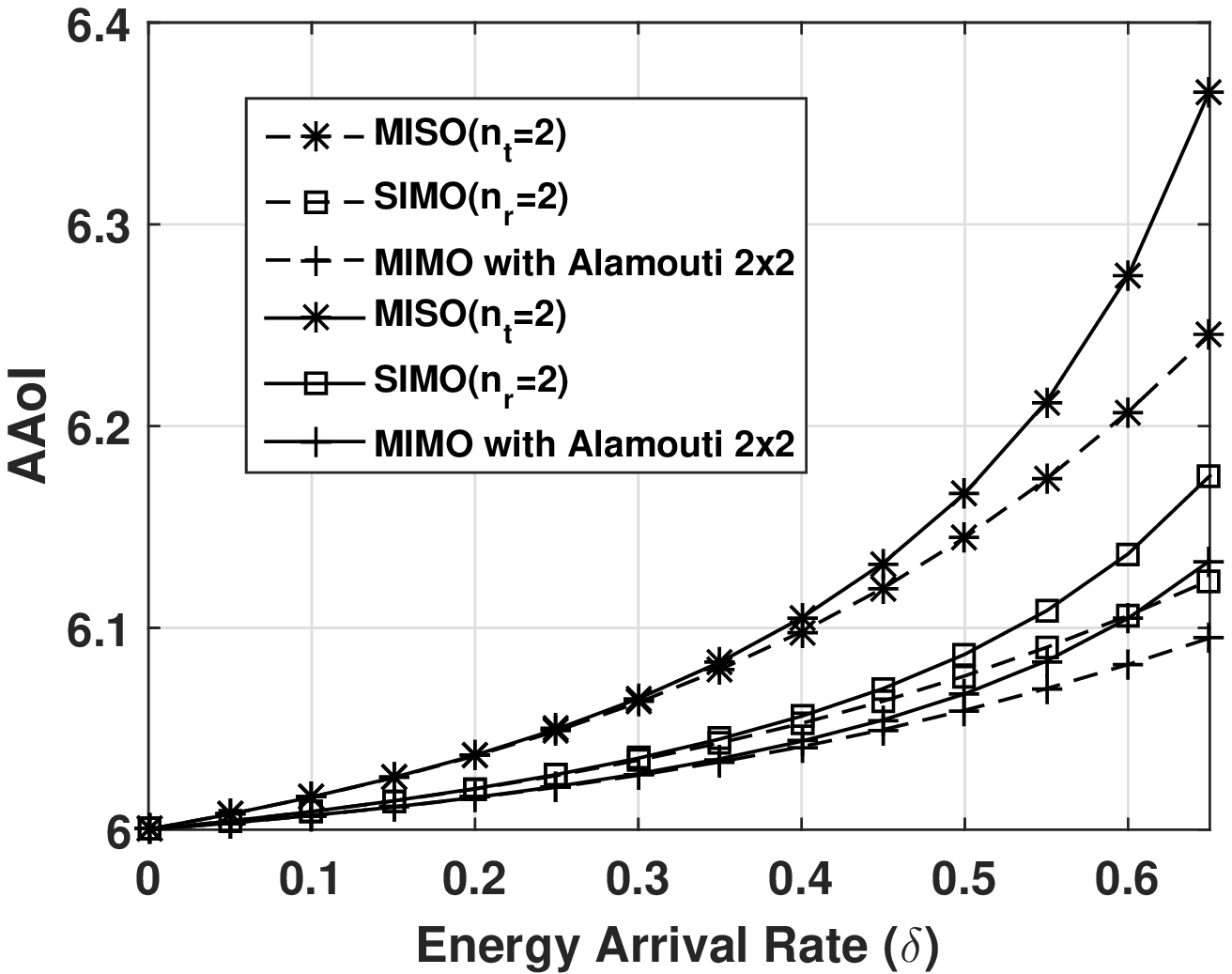}}
	\setlength{\abovecaptionskip}{-2pt} 
	\setlength{\belowcaptionskip}{-20pt} %
	\caption{(a) Average packet delay against energy arrival rate at jammer; (b) Average age of information against energy arrival rate at jammer:
		 $P=20$ dB, $P_J=20$ dB, $R=1$, $B=2$, $p_J=0.7$ and $\lambda=0.2$.}
\end{figure*}
%

\textit{In Figs.~\ref{fig:figure6}-\ref{subfig-1:figure16}: Dash and continuous curves correspond to battery with finite and infinite energy capacity, respectively.} In Fig.~\ref{fig:figure6}, the effect of energy arrival rate on the service rate is explored for different antenna configurations using the results in Theorems~\ref{th:secondtheorem} and \ref{th:thirdtheorem}. Note that MIMO with Alamouti provides the maximum service rate in comparison to SIMO and MISO systems with an increase in the number of antennas. In Figs. \ref{subfig-1:figure7} and \ref{subfig-1:figure8}, the average packet delay and AAoI are plotted against the energy arrival rate at the jammer for battery with finite and infinite energy capacity. It can be noticed that the AAoI and average packet delay is lowest for MIMO with Alamouti coding scheme in comparison to other cases. As the energy arrival rate $(\delta)$ increases at the jammer, the benefits of using Alamouti coding for the MIMO case are evident.
\begin{figure*}[htt]
	\centering
	\subfloat[\label{subfig-1:figure9}]{\includegraphics[trim={0.2cm 0cm 0.2cm .5cm},clip, height=2.4 in,width=3.3 in]{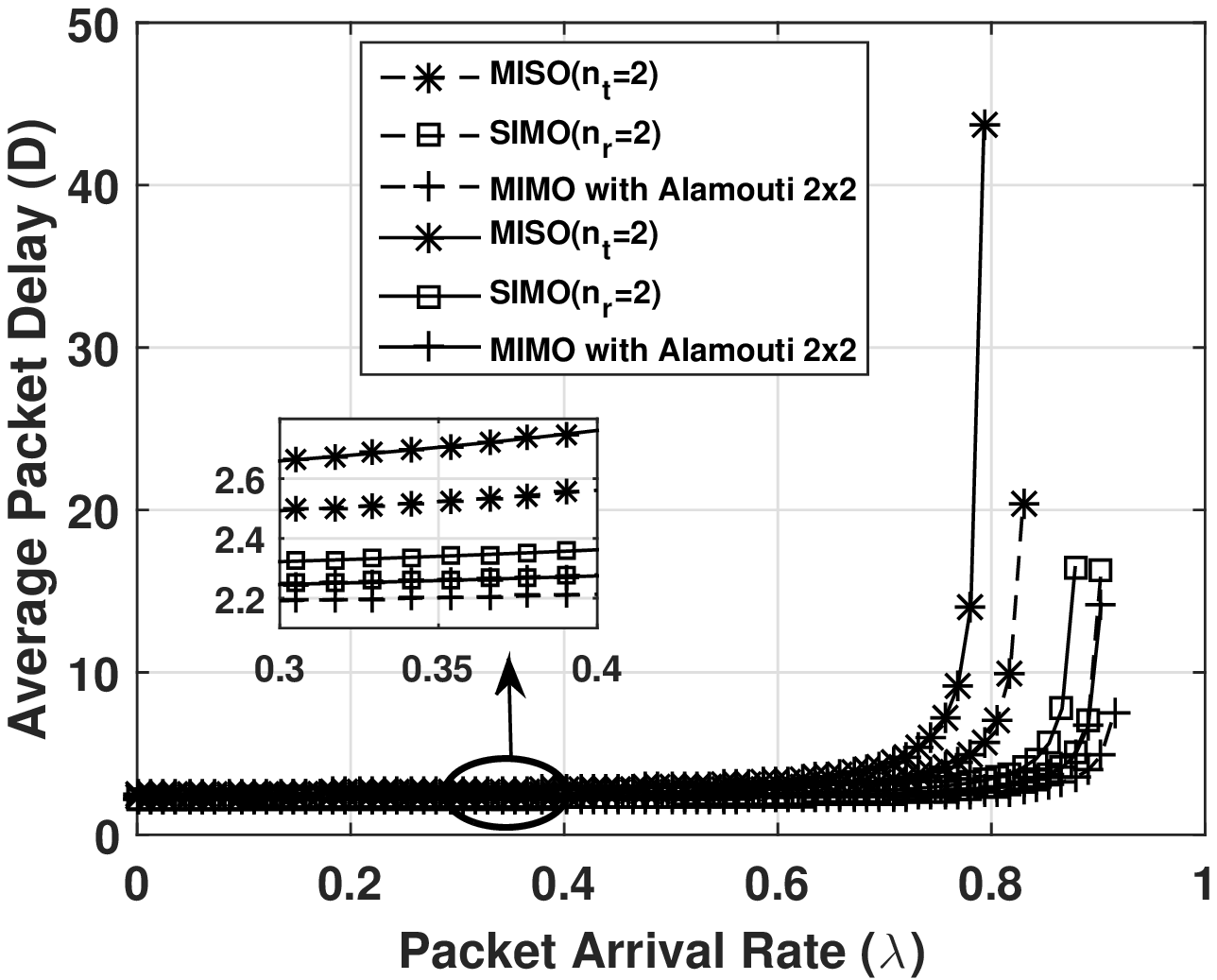}
	}
	\subfloat[\label{subfig-1:figure10}]{\includegraphics[trim={0.2cm 0cm 0.2cm .5cm},clip, height=2.4 in,width=3.3 in]{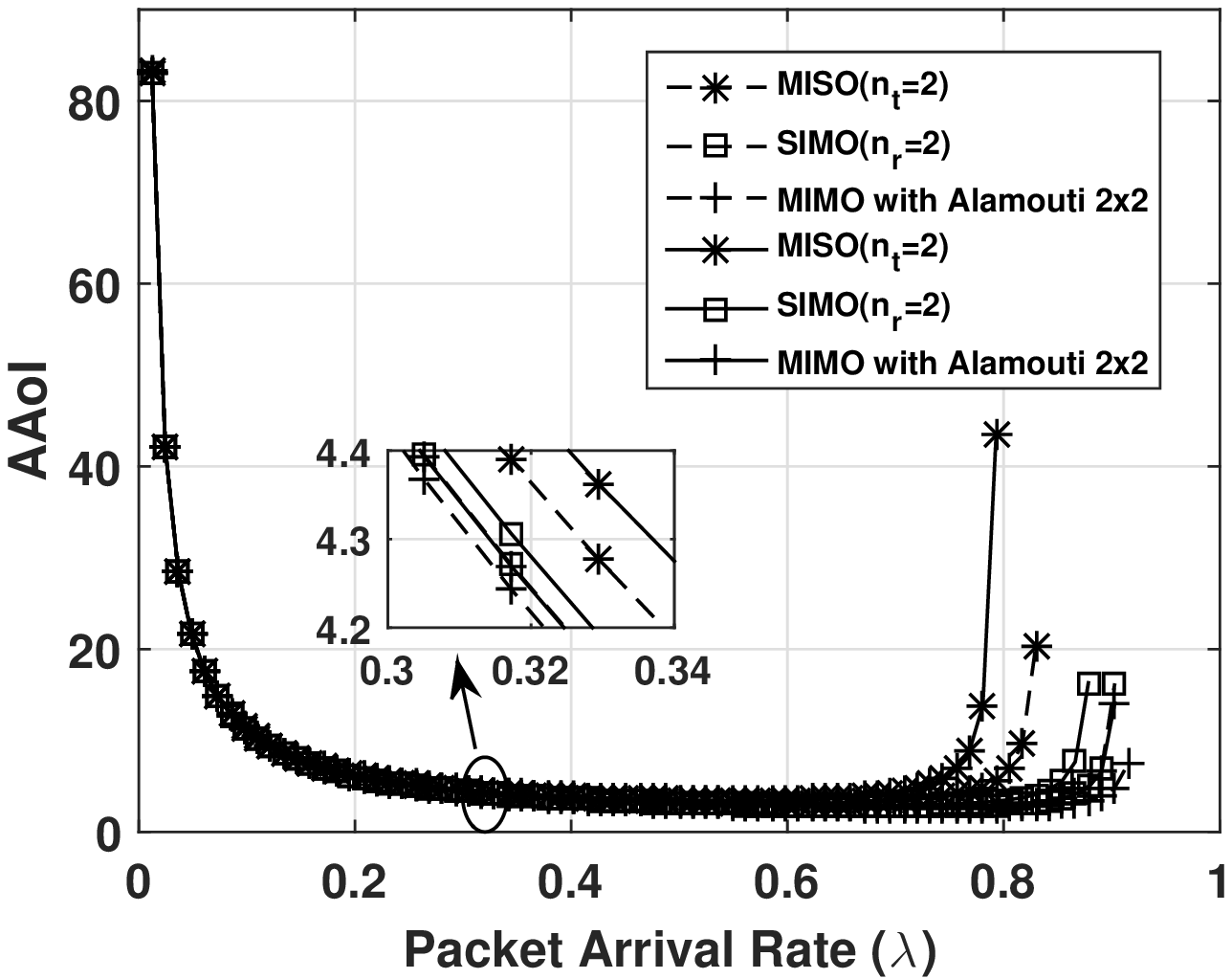}}
	\setlength{\belowcaptionskip}{-18pt} %
	\caption{(a) Average delay against packet arrival rate at transmitter; (b) Average age of information against packet arrival rate at transmitter:
		 $P=20$ dB, $P_J=20$ dB, $R=1$, $\delta=0.6$, $B=2$, and $p_J=0.7$.}
\end{figure*}

In Figs.~\ref{subfig-1:figure9} and \ref{subfig-1:figure10}, the average delay and AAoI are plotted against packet arrival rate at the transmitter for different antenna configurations, respectively. From Fig~\ref{subfig-1:figure10}, it can be observed that AAoI initially decreases with the increase in the arrival rate and then increases when the arrival rate goes beyond some specific value. The reason behind this observation is that when the $\lambda$ is low, the source does not often generate updates. Thus, the destination is not updated with new information often, and it results in high AAoI. When the arrival probability is high, there is an excessive queuing delay. Thus, the AAoI is high at the destination. This is because the received packets faced a queuing delay that affected the freshness of the information they carry. This is in contrast to the behavior of average packet delay with the increase in the arrival rate. \textit{This observation highlights the importance of AAoI as a measure to capture freshness since the performance metrics such as throughput and delay cannot capture this attribute.} It is worth noting that MIMO with Alamouti coding scheme can support the lowest value of average packet delay and AAoI with a higher value of the packet arrival rate $(\lambda)$. 
\begin{figure*}[htt]
	\centering
	\subfloat[\label{subfig-1:figure11}]{\includegraphics[trim={0.2cm 0cm 0.2cm .5cm},clip, height=2.4 in,width=3.3 in]{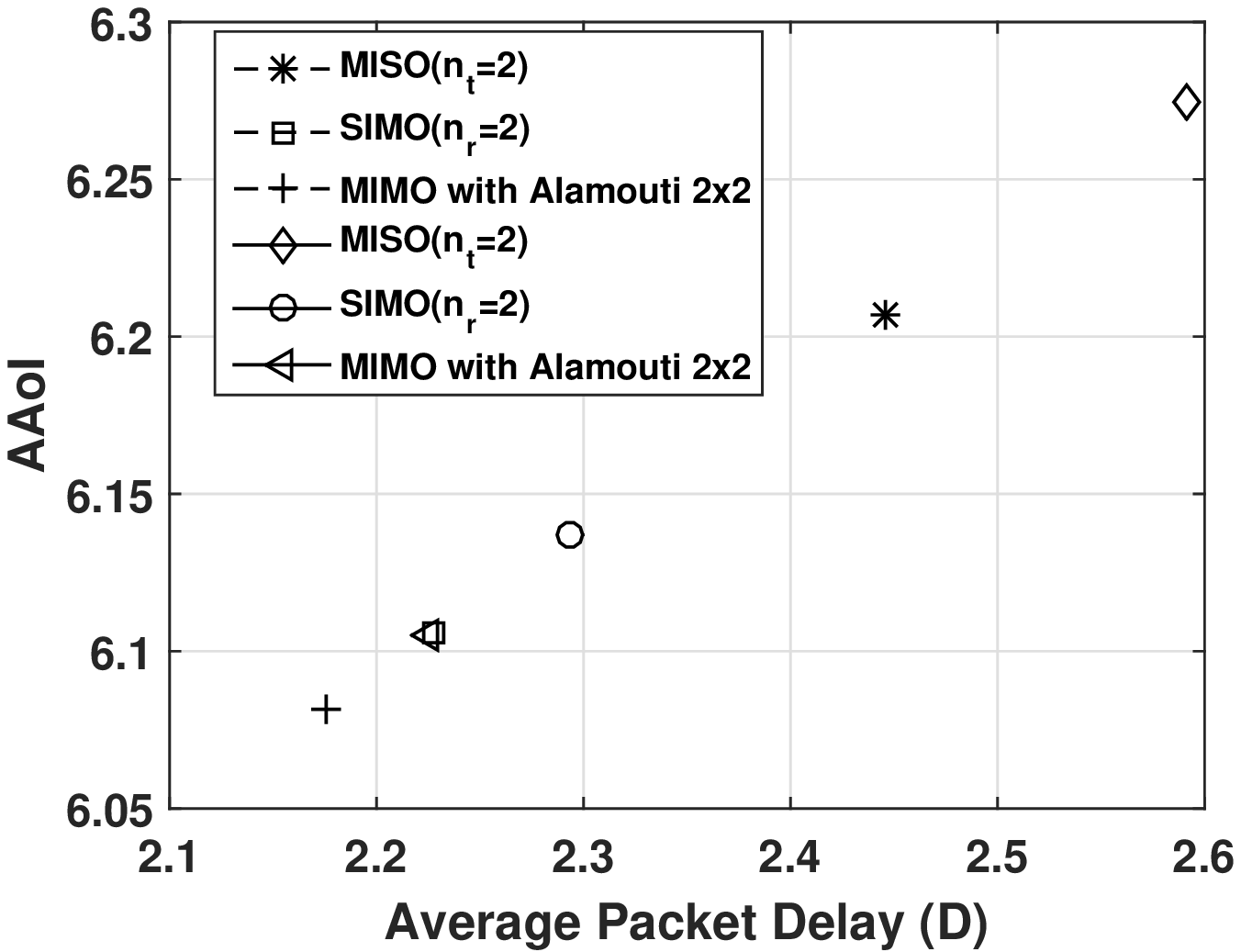}
	}
	\subfloat[\label{subfig-1:figure12}]{\includegraphics[trim={0.2cm 0cm 0.2cm .5cm},clip, height=2.4 in,width=3.3 in]{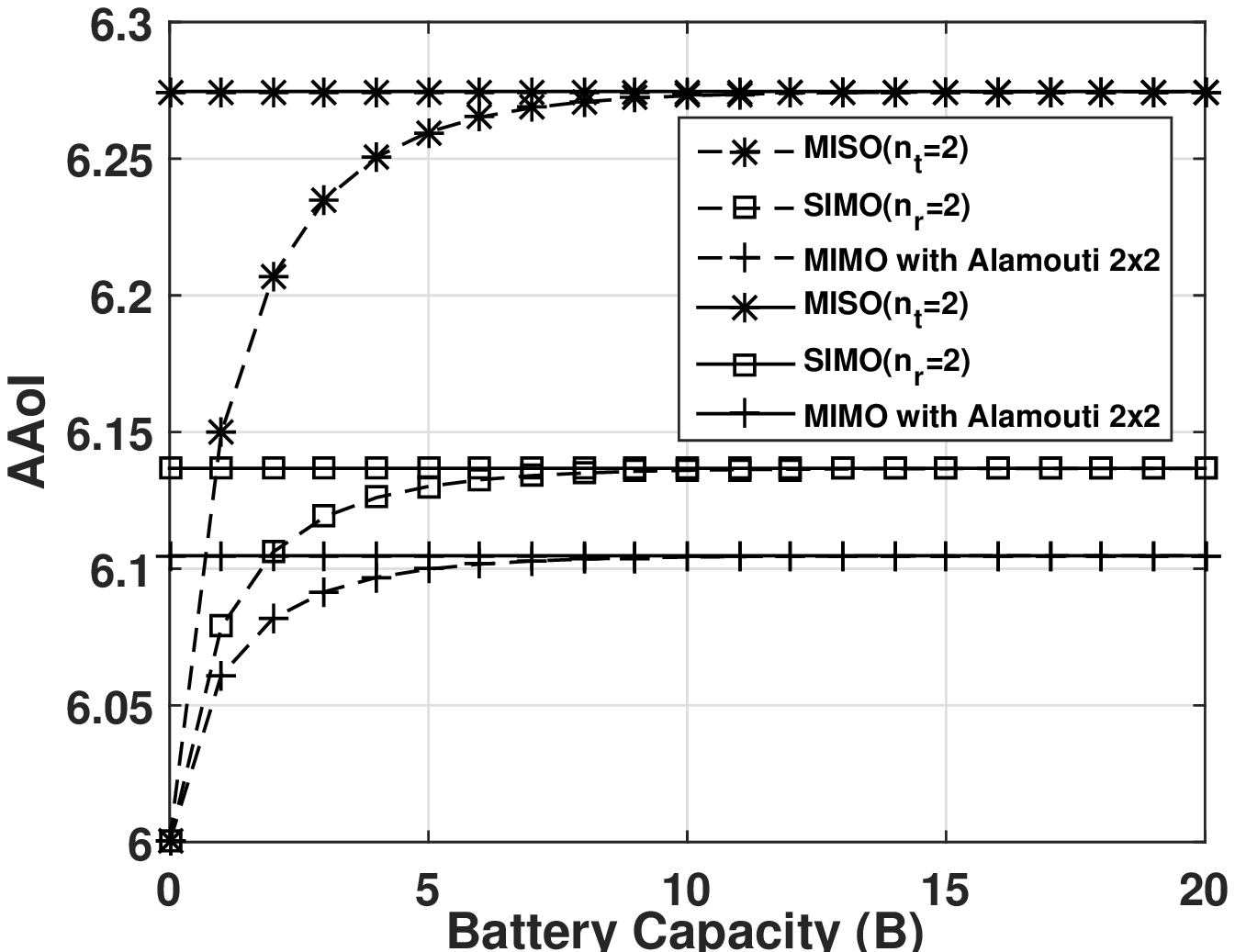}}
	\setlength{\belowcaptionskip}{-20pt} %
	\caption{(a) Average age of information against average packet delay ($B=2$); (b) Average age of information against energy harvesting battery capacity at jammer:
		$P=20$ dB, $P_J=20$ dB, $R=1$, $\delta=0.6$, $\lambda=0.2$, and $p_J=0.7$.}
\end{figure*}

Fig.~\ref{subfig-1:figure11} shows the interplay between AAoI and average packet delay for different antenna configurations. When the jammer has an energy battery of infinite capacity, the value of average packet delay and AAoI increases compared to the jammer with finite energy battery capacity. We can observe that MIMO with Alamouti guarantees a low value of average packet delay as well as AAoI. From Fig.~\ref{subfig-1:figure12}, one can see that when the battery size is around B = 10, the AAoI is similar to the case when the jammer has a battery of infinite capacity. Hence, the jammer cannot further increase the AAoI by increasing the battery size. 
\begin{figure*}[htt]
	\centering
	\subfloat[\label{subfig-1:figure13}]{\includegraphics[trim={0.2cm 0cm 0.2cm .5cm},clip, height=2.4 in,width=3.3 in]{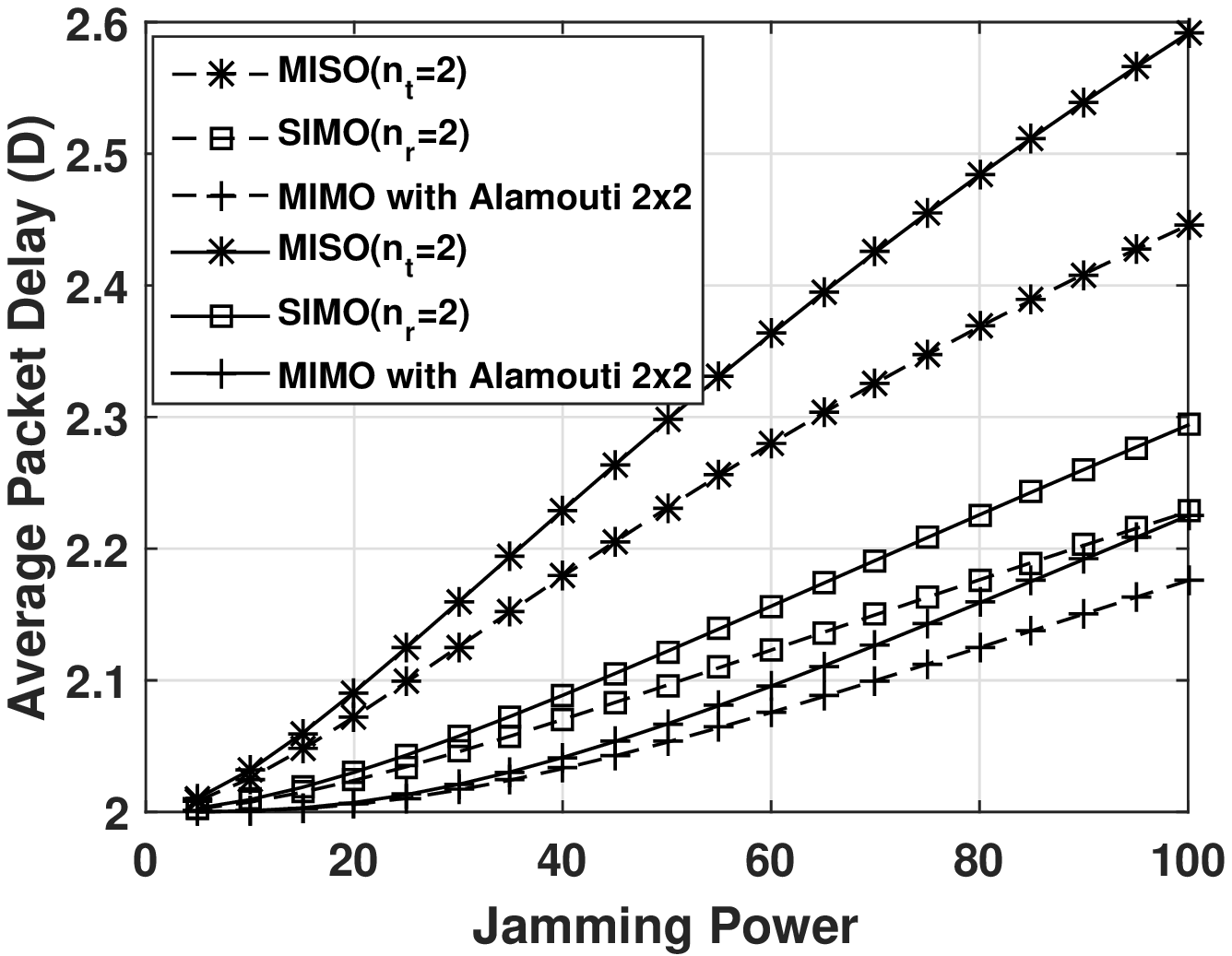}
	}
	\subfloat[\label{subfig-1:figure14}]{\includegraphics[trim={0.2cm 0cm 0.2cm .5cm},clip, height=2.4 in,width=3.3 in]{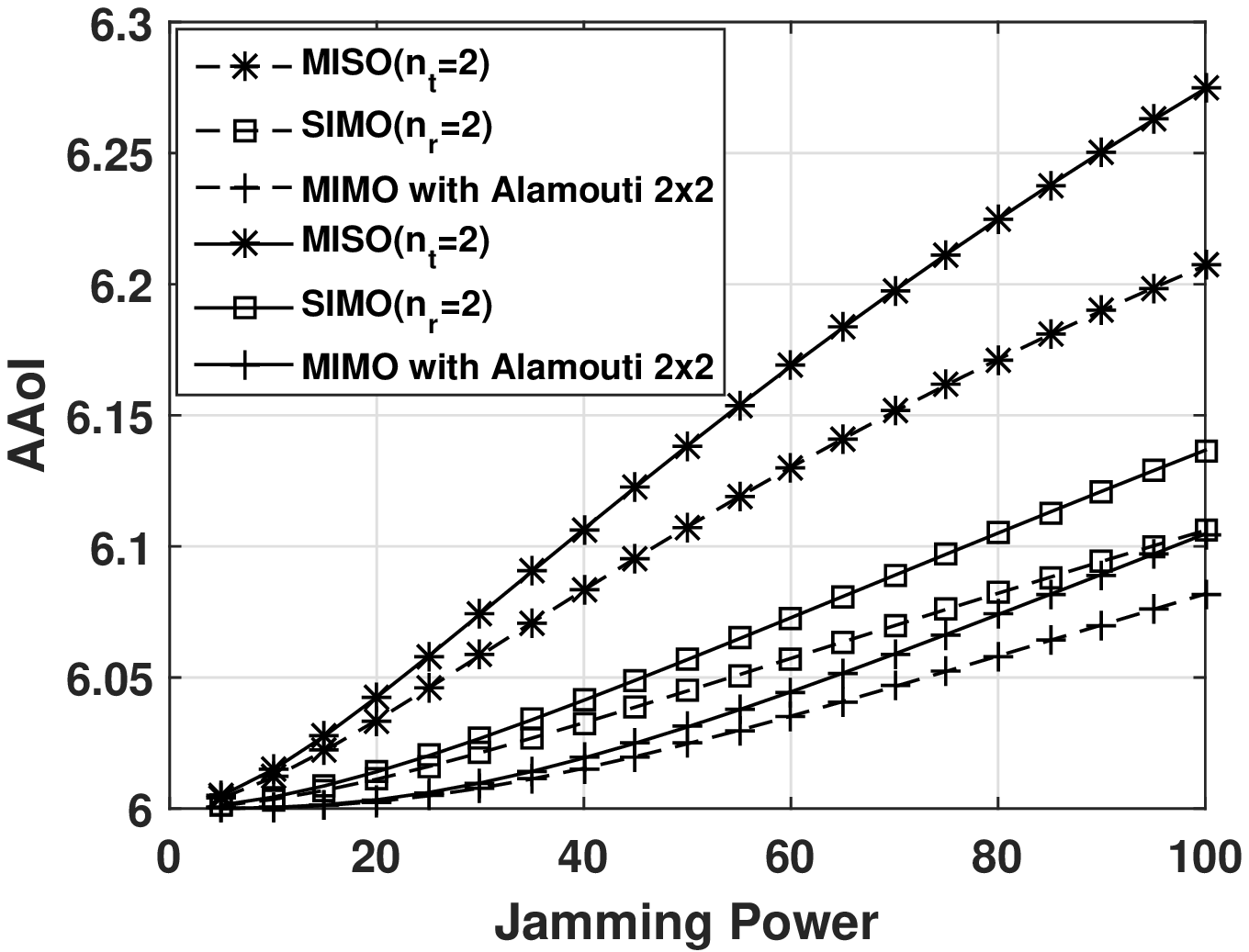}}\setlength{\abovecaptionskip}{-2pt}
	\setlength{\belowcaptionskip}{-20pt} %
	\caption{(a) Average packet delay against jamming power; (b) Average age of information against jamming power:
		$P=20$ dB, $R=1$, $\delta=0.6$, $\lambda=0.2$, $B=2$ and $p_J=0.7$.}
\end{figure*}
\begin{figure*}[htt]
	\centering
	\subfloat[\label{subfig-1:figure15}]{\includegraphics[trim={0.2cm 0cm 0.2cm .5cm},clip, height=2.4 in,width=3.3 in]{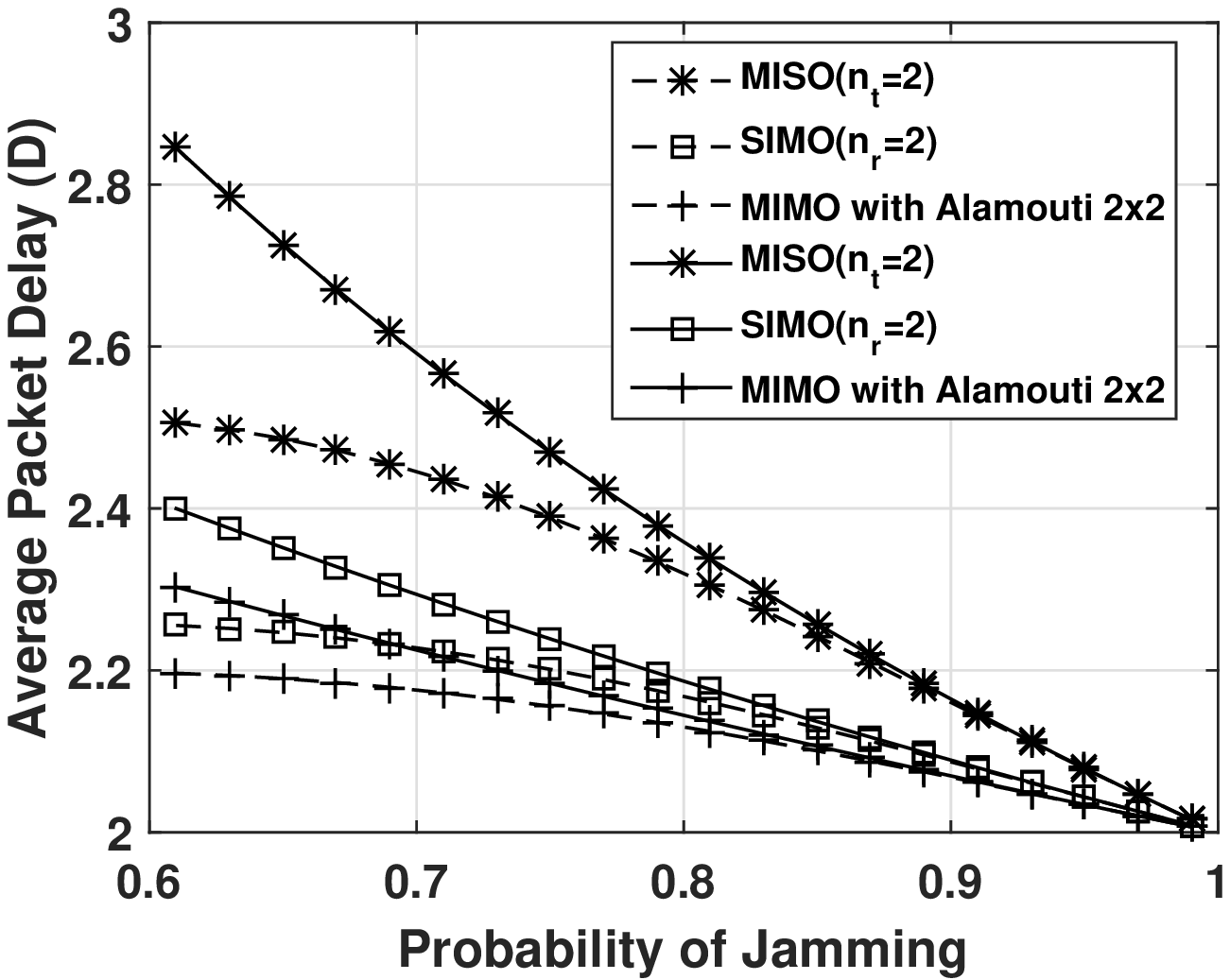}
	}
	\subfloat[\label{subfig-1:figure16}]{\includegraphics[trim={0.2cm 0cm 0.2cm .5cm},clip, height=2.4 in,width=3.3 in]{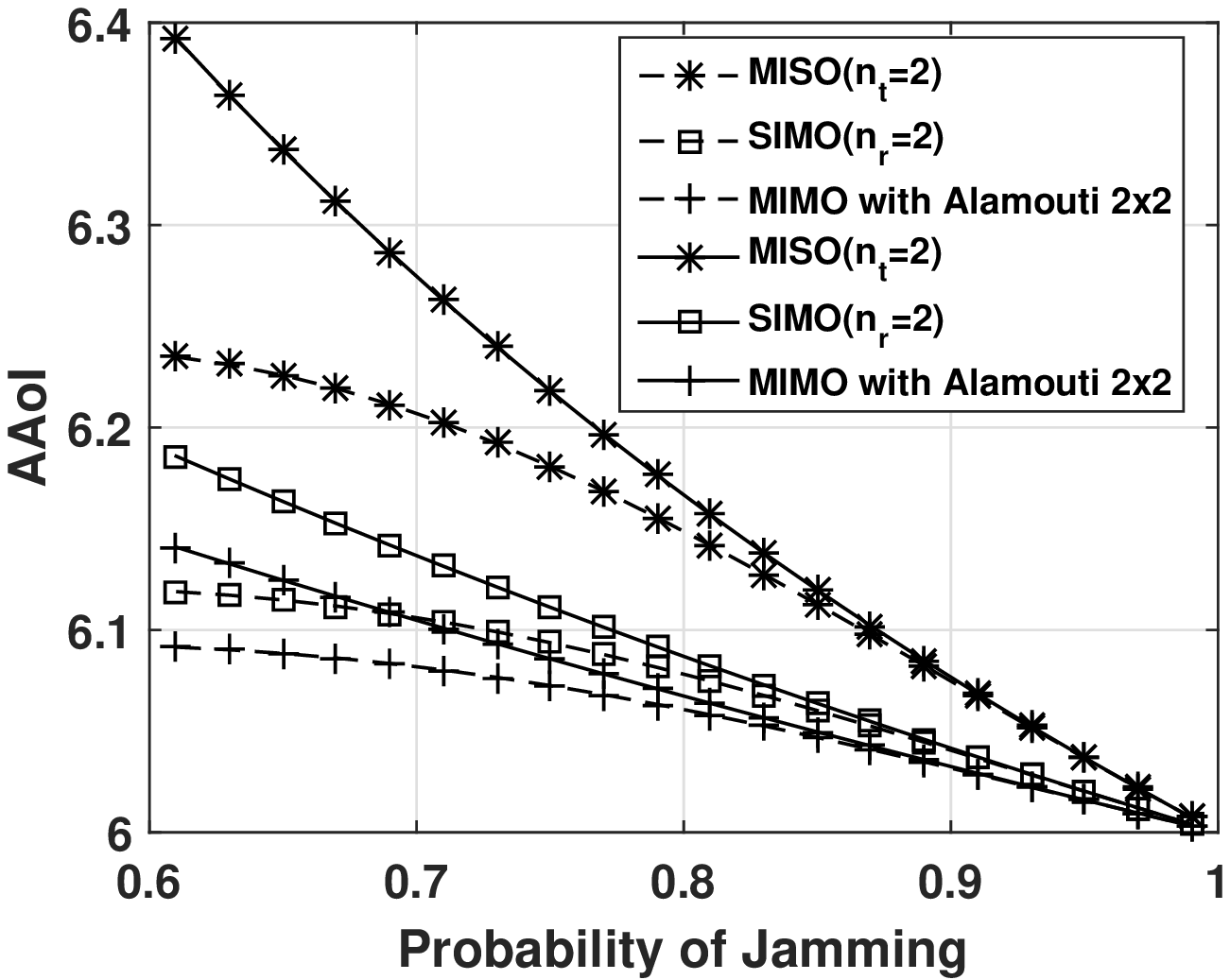}}
	\setlength{\abovecaptionskip}{-1pt}
	\setlength{\belowcaptionskip}{-15pt} %
	\caption{(a) Average packet delay against probability of jamming; (b) Average age of information against probability of jamming:
		$P=20$dB, $P_J=20$dB, $R=1$, $\delta=0.6$, $\lambda=0.2$, and $B=2$.}
\end{figure*}

Figs.~\ref{subfig-1:figure13} and \ref{subfig-1:figure14} show the impact of jamming power on average packet delay and AAoI. Figs.~\ref{subfig-1:figure15} and \ref{subfig-1:figure16} show the impact of the probability of jamming on average packet delay and AAoI.  From the plots, it can be observed that with increasing the jamming probability, the average packet delay and AAoI decrease as it is limited by the energy harvesting ability. It can be noticed that MIMO with Alamouti performs better in both cases, i.e., finite and infinite battery capacity, among all considered configurations. As the jamming power increases at the attacker, the benefits of space-time diversity are more prominent.
\begin{figure*}[htt]
	\centering
	\subfloat[Without delay constraint\label{subfig-1:figure17}]{\includegraphics[trim={0.2cm 0cm 0.2cm .5cm},clip, height=2.4 in,width=3.3 in]{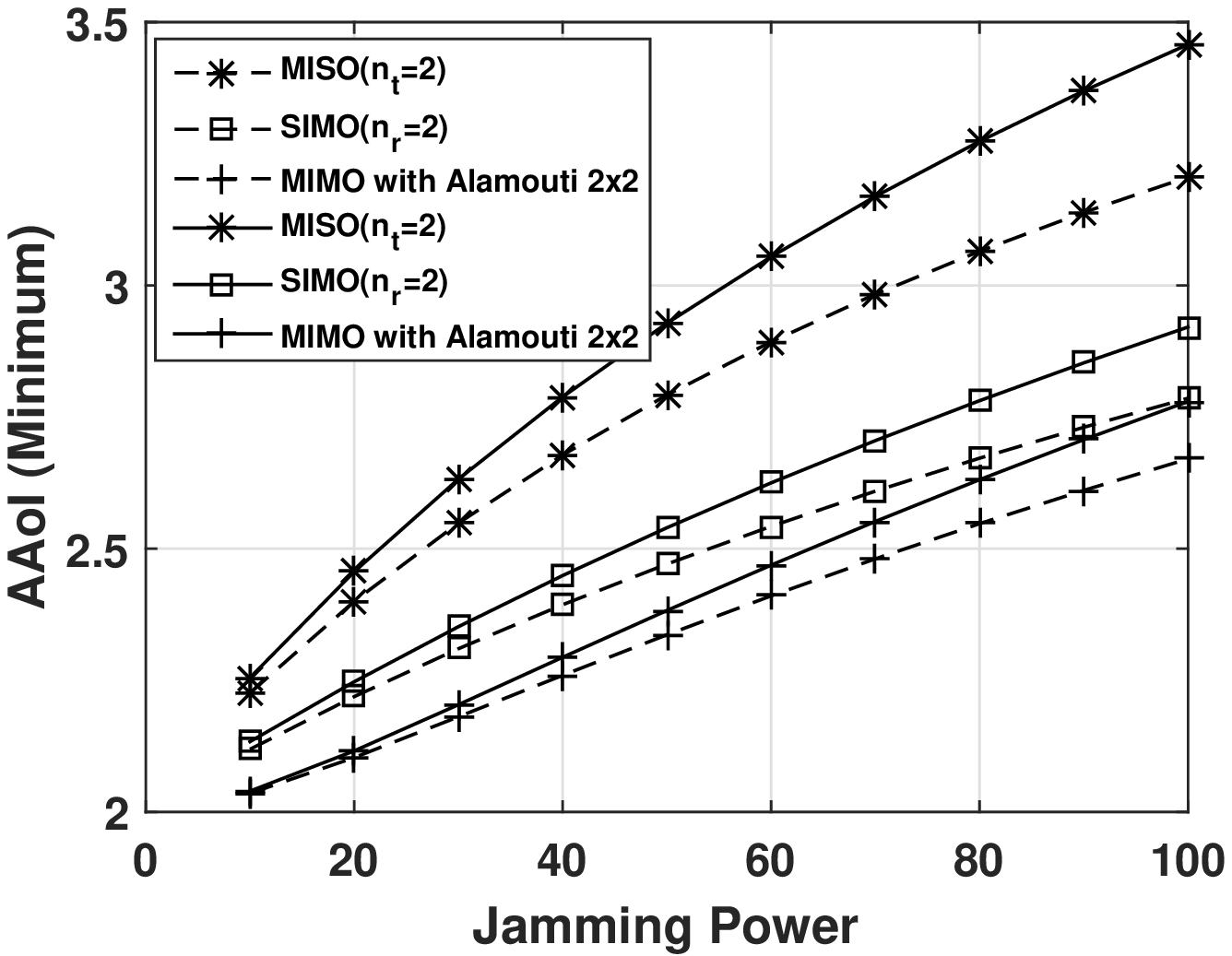}
	}
	\subfloat[With delay constraint $(D_{th} = 2.25)$\label{subfig-1:figure18}]{\includegraphics[trim={0.2cm 0cm 0.2cm .5cm},clip, height=2.4 in,width=3.3 in]{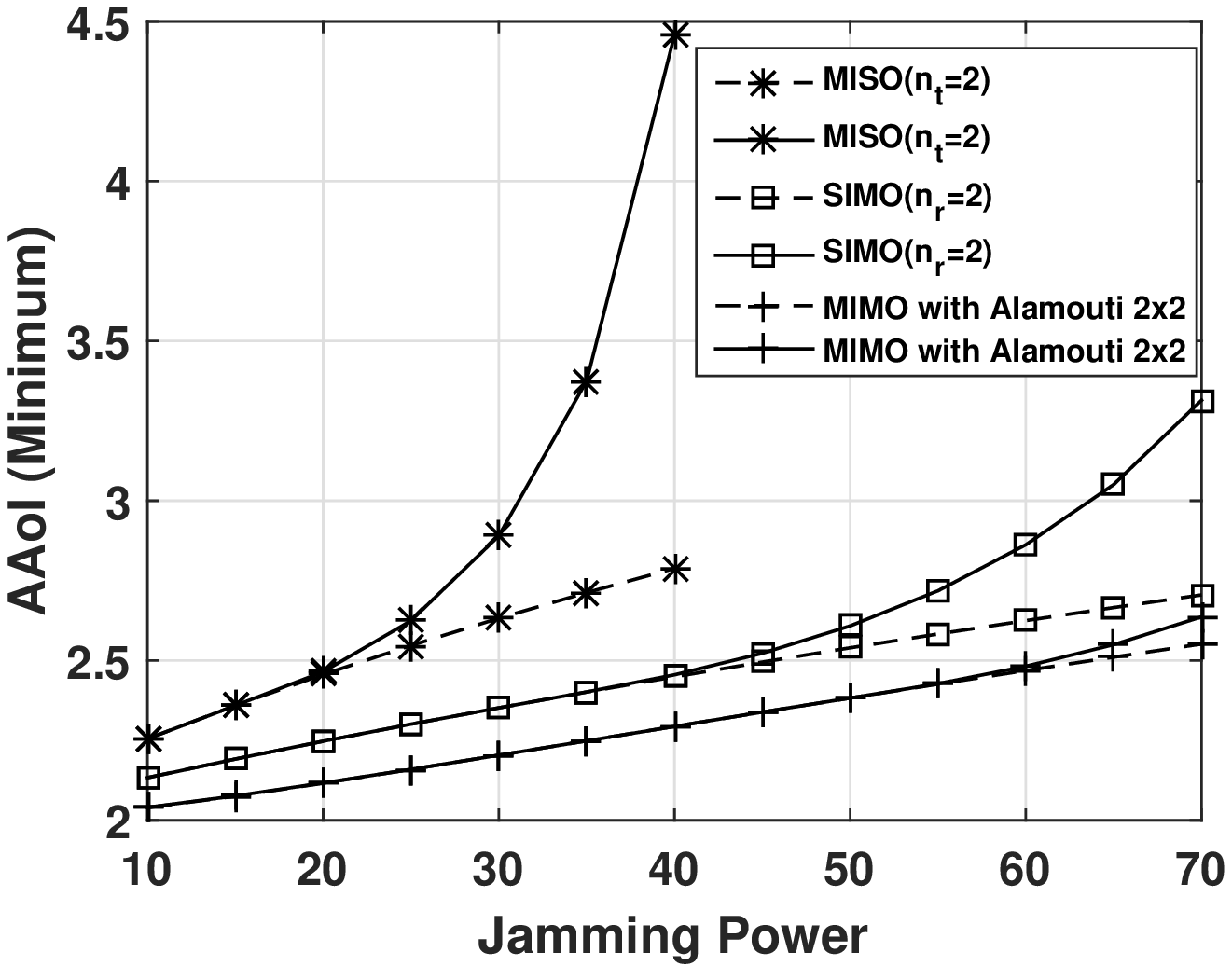}}
	\setlength{\belowcaptionskip}{-10pt} %
	\caption{Minimizing AAoI against jamming power: $P=20$ dB, $R=1$, $\delta=0.6$, $B=2$, and $p_J=0.7$. (a) Dashed curve: battery with finite energy capacity and continuous curves: battery with infinite energy capacity and (b) Dashed curve: without delay constraint and continuous curve: with delay constraint.}
\end{figure*}
Figs.~\ref{subfig-1:figure17} and \ref{subfig-1:figure18} show the minimized AAoI for optimal values of $\lambda$ against jamming power. In Fig.~\ref{subfig-1:figure17}, AAoI is plotted against the jamming power for the optimization problem stated in \eqref{AAoI_min_state}. In this case, there is no constraint on delay. From the figure, it can be observed that MIMO with Alamouti ensures minimum AAoI with an increase in the jamming power. In Fig.~\ref{subfig-1:figure18}, AAoI is plotted against the jamming power for the optimization problem stated in \eqref{AAoI_opt_2}. Recall that this optimization problem aims to minimize the AAoI with respect to arrival rate for a delay-sensitive system. It can be observed that the AAoI increases sharply with an increase in jamming power for the MISO system when there is a constraint on the delay. The Table~\ref{tab:Table_1} shows the corresponding value of $\lambda$ which minimizes the AAoI for MIMO $(2 \times 2)$ with Alamouti for both the optimization problems. However, in the case of MIMO with Alamouti coding, it can guarantee minimum AAoI among the different antenna configurations even if there is a delay constraint. It can be seen that with the increase in the jamming power, it is not possible to achieve low AAoI with stringent delay constraints. 
\begin{table}
	\centering
	\caption
	{MIMO $(2 \times 2)$ with Alamouti coding scheme : Optimum values of data arrival rate ($\lambda$) against jamming power for infinite energy battery capacity $(D_{th} = 2.25)$.}
	\begin{tabular}{|c|c|c|c|}
		\hline
		\multicolumn{2}{|c|}{Delay-Tolerant Traffic} & \multicolumn{2}{|c|}{Delay-Sensitive Traffic}\\ \hline
		\textbf{$P_J$} & \textbf{$\lambda$} & \textbf{$P_J$}   &\textbf{$\lambda$} \\ \hline
		10   & 0.9804  & 10   & 0.9800   \\ \hline
		15   & 0.9631  & 15   & 0.9626   \\ \hline
		20	 & 0.9439  & 20   & 0.9435   \\ \hline
		25	 & 0.9241  & 25   & 0.9240   \\ \hline
		30	 & 0.9044  & 30   & 0.9044   \\ \hline
		35	 & 0.8851  & 35   & 0.8852   \\ \hline
		40	 & 0.8665  & 40   & 0.8665   \\ \hline
		45	 & 0.8488  & 45   & 0.8487   \\ \hline
		50	 & 0.8319  & 50   & 0.8318   \\ \hline
		55	 & 0.8158  & 55   & 0.8045   \\ \hline
		60	 & 0.8006  & 60   & 0.7646   \\ \hline
		65	 & 0.7862  & 65   & 0.7217   \\ \hline
		70	 & 0.7726  & 70   & 0.6746   \\ \hline
	\end{tabular}\label{tab:Table_1}\vspace{-13pt}
\end{table}
\begin{figure}
	\centering
	\includegraphics[trim={0.2cm 0cm 0.2cm .5cm},clip, height=2.4 in,width=3.3 in]{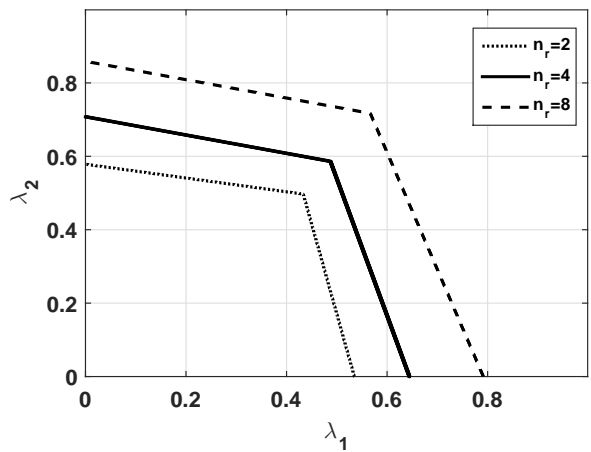}%
	\setlength{\belowcaptionskip}{-23pt}
	\caption{Stability region with $P_1 = P_2 = 10$dB, $ P_J = 20$dB , $\gamma_1 = 0.7$, $\gamma_2 = 0.5$, and $p_J=0.6$.}\label{fig:figure19}
\end{figure}
In Fig.~\ref{fig:figure19}, the stability region obtained from \eqref{rate_1} and \eqref{rate_2} for the $2$-user SIMO broadcast channel in the presence of a jammer is plotted. In this case, it is assumed that battery at the jammer has unlimited capacity and $\delta > p_J$. It can be observed that as the number of antennas at the receivers increases, the stability region also increases. 
\vspace{-8pt}
\section{Conclusions}
This work characterizes the performance of a point-to-point MIMO system under random arrival of data at the transmitter in the presence of a jammer with energy harvesting ability. The outage probability is derived for the considered system model for the Rayleigh fading scenario. The derived results on outage probabilities were used to characterize the \textit{average service rate}, \textit{average delay}, and \textit{AAoI} when the attacker has a battery with finite or unlimited capacity. The results illustrate the role of transmit, receive, space, and time diversity in mitigating jamming attacks and improving the performance of the system in the presence of a jammer. The developed results on average delay and AAoI show the relevance of these metrics under jamming attack when the transmitter has time-sensitive information and timely updates are of importance. The exploitation of time and space diversity helps to achieve superior performance even when the transmitter has time-sensitive data and less power budget. Moreover, the work also demonstrates how the developed results for the point-to-point case helps to determine the stability region for multi-user scenarios such as 2-user BC in the presence of jammer. Extending the results for the scenario when jammer has multiple antennas is an interesting direction for research. Another interesting direction of work is the characterization of stability region for multi-user scenario under jamming attack. 
\normalem
\bibliographystyle{IEEEtran}
\bibliography{mybiboo2} 

\begin{thebibliography}{10}
\providecommand{\url}[1]{#1}
\csname url@samestyle\endcsname
\providecommand{\newblock}{\relax}
\providecommand{\bibinfo}[2]{#2}
\providecommand{\BIBentrySTDinterwordspacing}{\spaceskip=0pt\relax}
\providecommand{\BIBentryALTinterwordstretchfactor}{4}
\providecommand{\BIBentryALTinterwordspacing}{\spaceskip=\fontdimen2\font plus
\BIBentryALTinterwordstretchfactor\fontdimen3\font minus
  \fontdimen4\font\relax}
\providecommand{\BIBforeignlanguage}[2]{{%
\expandafter\ifx\csname l@#1\endcsname\relax
\typeout{** WARNING: IEEEtran.bst: No hyphenation pattern has been}%
\typeout{** loaded for the language `#1'. Using the pattern for}%
\typeout{** the default language instead.}%
\else
\language=\csname l@#1\endcsname
\fi
#2}}
\providecommand{\BIBdecl}{\relax}
\BIBdecl

\bibitem{allipuram-ncc-2020}
S.~{Allipuram}, S.~{Parmar}, P.~{Mohapatra}, N.~{Pappas}, and S.~{Chakrabarti},
  ``Mitigating jamming attacks in a {MIMO} system with bursty traffic,'' in
  \emph{Natl. Conf. Commun. (NCC)}, 2020, pp. 1--6.

\bibitem{AoINOW}
A.~Kosta, N.~Pappas, and V.~Angelakis, ``Age of information: A new concept,
  metric, and tool,'' \emph{Found. Trends Netw.}, vol.~12, no.~3, pp. 162--259,
  2017.

\bibitem{cui-jsac-2004}
{Shuguang Cui}, A.~J. {Goldsmith}, and A.~{Bahai}, ``Energy-efficiency of
  {MIMO} and cooperative {MIMO} techniques in sensor networks,'' \emph{IEEE J.
  Sel. Areas Commun.}, vol.~22, no.~6, pp. 1089--1098, 2004.

\bibitem{nguyen-tgcn-2018}
D.~N. {Nguyen}, M.~{Krunz}, and E.~{Dutkiewicz}, ``Full-duplex {MIMO} radios: A
  greener networking solution,'' \emph{IEEE Trans. Green Commun. Netw.},
  vol.~2, no.~3, pp. 652--665, 2018.

\bibitem{8068986}
Z.~{Chu}, H.~X. {Nguyen}, T.~A. {Le}, M.~{Karamanoglu}, E.~{Ever}, and
  A.~{Yazici}, ``Secure wireless powered and cooperative jamming {D2D}
  communications,'' \emph{IEEE Trans. Green Commun. Netw.}, vol.~2, no.~1, pp.
  1--13, 2018.

\bibitem{8768226}
L.~{Gui}, B.~{He}, X.~{Zhou}, C.~{Yu}, F.~{Shu}, and J.~{Li},
  ``Energy-efficient wireless powered secure transmission with cooperative
  jamming for public transportation,'' \emph{IEEE Trans. Green Commun. Netw.},
  vol.~3, no.~4, pp. 876--885, 2019.

\bibitem{9170580}
H.~{Rahman}, A.~{Vinayachandran}, S.~R.~B. {Pillai}, K.~{Appaiah}, R.~{Vaze},
  and N.~{Kashyap}, ``Deadline constrained packet scheduling in the presence of
  an energy harvesting jammer,'' \emph{IEEE Trans. Green Commun. Netw.},
  vol.~5, no.~1, pp. 278--290, 2021.

\bibitem{7294641}
D.~T. Hoang, D.~Niyato, P.~Wang, and D.~I. Kim, ``Performance analysis of
  wireless energy harvesting cognitive radio networks under smart jamming
  attacks,'' \emph{IEEE Trans. Cogn. Commun. Netw.}, vol.~1, no.~2, pp.
  200--216, 2015.

\bibitem{lu-TVT-2016}
H.~{Lu}, P.~{Hong}, and K.~{Xue}, ``Analysis on decode-and-forward two-path
  relay networks: When and how to cooperate,'' \emph{IEEE Trans. Veh.
  Technol.}, vol.~65, no.~7, pp. 5758--5763, 2016.

\bibitem{wang-tvt-2017}
X.~{Wang}, Y.~{Chen}, L.~{Cai}, and J.~{Pan}, ``Minimizing secrecy outage
  probability in multiuser wireless systems with stochastic traffic,''
  \emph{IEEE Trans. Veh. Technol.}, vol.~66, no.~7, pp. 6449--6460, 2017.

\bibitem{xu2005feasibility}
W.~Xu, W.~Trappe, Y.~Zhang, and T.~Wood, ``The feasibility of launching and
  detecting jamming attacks in wireless networks,'' in \emph{6th Int. Symp.
  Mob. Ad Hoc Netw. Comput. (MobiHoc)}, 2005, pp. 46--57.

\bibitem{amuru-tit-2014}
S.~{Amuru}, R.~{Tandon}, R.~M. {Buehrer}, and T.~C. {Clancy}, ``Retroactive
  antijamming for {MISO} broadcast channels,'' \emph{IEEE Trans. Inf. Theory},
  vol.~60, no.~6, pp. 3593--3619, June 2014.

\bibitem{kashyap-tit-2004}
A.~{Kashyap}, T.~{Basar}, and R.~{Srikant}, ``Correlated jamming on {MIMO}
  gaussian fading channels,'' \emph{IEEE Trans. Inf. Theory}, vol.~50, no.~9,
  pp. 2119--2123, Sep. 2004.

\bibitem{8968374}
S.~{Allipuram}, P.~{Mohapatra}, and S.~{Chakrabarti}, ``Secrecy performance of
  an artificial noise assisted transmission scheme with active eavesdropper,''
  \emph{IEEE Commun. Lett.}, vol.~24, no.~5, pp. 971--975, 2020.

\bibitem{7470273}
Y.~{Wu}, R.~{Schober}, D.~W.~K. {Ng}, C.~{Xiao}, and G.~{Caire}, ``Secure
  massive {MIMO} transmission with an active eavesdropper,'' \emph{IEEE Trans.
  Inf. Theory}, vol.~62, no.~7, pp. 3880--3900, 2016.

\bibitem{ryu2016transmission}
J.~Y. Ryu, J.~Lee, and T.~Q. Quek, ``Transmission strategy against
  opportunistic attack for {MISO} secure channels,'' \emph{{IEEE} Commun.
  Lett.}, vol.~20, no.~11, pp. 2304--2307, 2016.

\bibitem{8121977}
L.~{Hu}, H.~{Wen}, B.~{Wu}, F.~{Pan}, R.~{Liao}, H.~{Song}, J.~{Tang}, and
  X.~{Wang}, ``Cooperative jamming for physical layer security enhancement in
  internet of things,'' \emph{IEEE Internet Things J.}, vol.~5, no.~1, pp.
  219--228, 2018.

\bibitem{8351952}
K.~{Lee}, J.~{Hong}, H.~{Choi}, and M.~{Levorato}, ``Adaptive wireless-powered
  relaying schemes with cooperative jamming for two-hop secure communication,''
  \emph{IEEE Internet Things J.}, vol.~5, no.~4, pp. 2793--2803, 2018.

\bibitem{ulukus-mcc-2005}
S.~{Shafiee} and S.~{Ulukus}, ``Capacity of multiple access channels with
  correlated jamming,'' \emph{IEEE Mil. Commun. Conf.}, vol.~1, pp. 218--224,
  Oct 2005.

\bibitem{garnaev-twc-2016}
A.~{Garnaev}, M.~{Baykal-Gursoy}, and H.~V. {Poor}, ``A game theoretic analysis
  of secret and reliable communication with active and passive adversarial
  modes,'' \emph{IEEE Trans. Wireless Commun.}, vol.~15, no.~3, pp. 2155--2163,
  March 2016.

\bibitem{mukherjee-tsp-2013}
A.~{Mukherjee} and A.~L. {Swindlehurst}, ``Jamming games in the {MIMO} wiretap
  channel with an active eavesdropper,'' \emph{IEEE Trans. Signal Process.},
  vol.~61, no.~1, pp. 82--91, Jan 2013.

\bibitem{garnev-wifs-2014}
A.~{Garnaev} and W.~{Trappe}, ``Fair resource allocation under an unknown
  jamming attack: a bayesian game,'' \emph{IEEE Int. Workshop Inf. Forensics
  Secur. (WIFS)}, pp. 227--232, Dec 2014.

\bibitem{sagduyu-isit-2010}
Y.~E. {Sagduyu}, R.~A. {Berry}, and A.~{Ephremides}, ``Jamming games for power
  controlled medium access with dynamic traffic,'' \emph{IEEE Int. Symp. Inf.
  Theory (ISIT)}, pp. 1818--1822, June 2010.

\bibitem{5518798}
------, ``Wireless jamming attacks under dynamic traffic uncertainty,''
  \emph{8th Int. Symp. Model. Optim. Mob. Ad Hoc Wirel. Netw.}, pp. 303--312,
  May 2010.

\bibitem{sun2019age}
Y.~Sun, I.~Kadota, R.~Talak, and E.~Modiano, ``Age of information: A new metric
  for information freshness,'' \emph{Synth. Lect. Commun.}, 2019.

\bibitem{chen2020multiuser}
H.~Chen, Q.~Wang, Z.~Dong, and N.~Zhang, ``Multiuser scheduling for minimizing
  age of information in uplink {MIMO} systems,'' \emph{arXiv preprint
  arXiv:2002.00403}, 2020.

\bibitem{SunAoIW2018}
Y.~{Xiao} and Y.~{Sun}, ``A dynamic jamming game for real-time status
  updates,'' in \emph{IEEE INFOCOM Workshops}, 2018, pp. 354--360.

\bibitem{GarnaevAoIW2019}
A.~{Garnaev}, W.~{Zhang}, J.~{Zhong}, and R.~D. {Yates}, ``Maintaining
  information freshness under jamming,'' in \emph{IEEE INFOCOM Workshops},
  2019, pp. 90--95.

\bibitem{8320826}
P.~Mohapatra, N.~Pappas, J.~Lee, T.~Q.~S. Quek, and V.~Angelakis, ``Secure
  communications for the two-user broadcast channel with random traffic,''
  \emph{IEEE Trans. Inf. Forensics Security}, vol.~13, no.~9, pp. 2294--2309,
  2018.

\bibitem{8357571}
N.~Pappas, M.~Kountouris, A.~Ephremides, and V.~Angelakis, ``Stable throughput
  region of the two-user broadcast channel,'' \emph{IEEE Trans. Commun.},
  vol.~66, no.~10, pp. 4611--4621, 2018.

\bibitem{9233374}
A.~{Hentati}, W.~{Jaafar}, J.~F. {Frigon}, and W.~{Ajib}, ``Analysis of the
  interdelivery time in {IoT} energy harvesting wireless sensor networks,''
  \emph{IEEE Internet Things J.}, vol.~8, no.~6, pp. 4920--4930, 2021.

\bibitem{9365698}
Z.~Chen, N.~Pappas, E.~Björnson, and E.~G. Larsson, ``Optimizing information
  freshness in a multiple access channel with heterogeneous devices,''
  \emph{IEEE Open J. Commun. Soc.}, vol.~2, pp. 456--470, 2021.

\bibitem{7959595}
B.~T. {Bacinoglu}, E.~{Uysal-Biyikoglu}, and C.~E. {Koksal}, ``Finite-horizon
  energy-efficient scheduling with energy harvesting transmitters over fading
  channels,'' \emph{IEEE Trans. Wireless Commun.}, vol.~16, no.~9, pp.
  6105--6118, 2017.

\bibitem{9086254}
N.~{Pappas}, Z.~{Chen}, and M.~{Hatami}, ``Average {AoI} of cached status
  updates for a process monitored by an energy harvesting sensor,'' \emph{54th
  Annu. Conf. Inf. Sci. Syst. (CISS)}, pp. 1--5, Mar. 2020.

\bibitem{7487983}
N.~Pappas, M.~Kountouris, J.~Jeon, A.~Ephremides, and A.~Traganitis, ``Effect
  of energy harvesting on stable throughput in cooperative relay systems,''
  \emph{J. Commun. Netw.}, vol.~18, no.~2, pp. 261--269, 2016.

\bibitem{7009995}
J.~Jeon and A.~Ephremides, ``On the stability of random multiple access with
  stochastic energy harvesting,'' \emph{IEEE J. Sel. Areas Commun.}, vol.~33,
  no.~3, pp. 571--584, 2015.

\bibitem{10.1504/IJAHUC.2014.066419}
K.~Grover, A.~Lim, and Q.~Yang, ``Jamming and anti-jamming techniques in
  wireless networks: A survey,'' \emph{Int. J. Ad Hoc Ubiquitous Comput.},
  vol.~17, no.~4, p. 197–215, Dec. 2014.

\bibitem{szpankowski_1994}
W.~{Szpankowski}, ``Stability conditions for some distributed systems: buffered
  random access systems,'' \emph{Adv. Appl. Probab.}, vol.~26, no.~2, p.
  498–515, 1994.

\bibitem{6552344}
W.~Cadeau and X.~Li, ``Jamming probabilities and throughput of cognitive radio
  communications against a wideband jammer,'' in \emph{2013 47th Annu. Conf.
  Inf. Sci. Syst. (CISS)}, 2013, pp. 1--6.

\bibitem{article}
N.~Sufyan, N.~A. Saqib, and M.~Zia, ``Detection of jamming attacks in 802.11b
  wireless networks,'' \emph{EURASIP J. Wirel. Commun. Netw.}, no.~1, pp.
  1--18, 2013.

\bibitem{article2}
C.~Wu, B.~Chen, M.~Yang, and M.~Dong, ``A study on parameter estimation and
  suppression for smeared spectrum jamming based on short-time fourier
  transform,'' \emph{EURASIP J. Wirel. Commun. Netw.}, no.~1, pp. 1--16, 2020.

\bibitem{6179344}
L.~Zhang, J.~Ren, and T.~Li, ``Time-varying jamming modeling and
  classification,'' \emph{IEEE Trans. Signal Process.}, vol.~60, no.~7, pp.
  3902--3907, 2012.

\bibitem{9039615}
R.~Yin, X.~Zhou, A.~Wang, C.~Zhong, C.~Wu, and X.~Chen, ``Adaptive channel
  estimation and tracking for {URA}-based massive {MIMO} systems,'' \emph{IEEE
  Access}, vol.~8, pp. 54\,213--54\,224, 2020.

\bibitem{8979256}
J.~Yuan, H.~Q. Ngo, and M.~Matthaiou, ``Machine learning-based channel
  prediction in massive {MIMO} with channel aging,'' \emph{IEEE Trans. Wireless
  Commun.}, vol.~19, no.~5, pp. 2960--2973, 2020.

\bibitem{sandhu2000space}
S.~Sandhu and A.~Paulraj, ``Space-time block codes: A capacity perspective,''
  \emph{IEEE Commun. Lett.}, vol.~4, no.~12, pp. 384--386, 2000.

\bibitem{tse2004fundamentals}
D.~Tse and P.~Viswanath, ``Fundamentals of wireless communication.''\hskip 1em
  plus 0.5em minus 0.4em\relax Cambridge University Press, 2004.

\bibitem{8764468}
A.~{Kosta}, N.~{Pappas}, A.~{Ephremides}, and V.~{Angelakis}, ``Age of
  information performance of multiaccess strategies with packet management,''
  \emph{J. Commun. Netw.}, vol.~21, no.~3, pp. 244--255, 2019.

\bibitem{8815559}
N.~{Pappas} and M.~{Kountouris}, ``Delay violation probability and age of
  information interplay in the two-user multiple access channel,'' in
  \emph{IEEE 20th International Workshop on Signal Processing Advances in
  Wireless Communications (SPAWC)}, 2019, pp. 1--5.

\bibitem{21216}
R.~Rao and A.~Ephremides, ``On the stability of interacting queues in a
  multiple-access system,'' \emph{IEEE Trans. Inf. Theory}, vol.~34, no.~5, pp.
  918--930, 1988.

\end{thebibliography}
\end{document}